\newtheorem{theorem}{Theorem}
\newtheorem{remark}{Remark}
\newtheorem{corollary}{Corollary}
\newtheorem{assumption}{Assumption}[section]
\newtheorem{proof}{Proof}  
  \let\oldparagraph\paragraph
  \renewcommand{\paragraph}{
    \@ifstar
      \xxxParagraphStar
      \xxxParagraphNoStar
  }
  \newcommand{\xxxParagraphStar}[1]{\oldparagraph*{#1}\mbox{}}
  \newcommand{\xxxParagraphNoStar}[1]{\oldparagraph{#1}\mbox{}}
  \let\oldsubparagraph\subparagraph
  \renewcommand{\subparagraph}{
    \@ifstar
      \xxxSubParagraphStar
      \xxxSubParagraphNoStar
  }
  \newcommand{\xxxSubParagraphStar}[1]{\oldsubparagraph*{#1}\mbox{}}
  \newcommand{\xxxSubParagraphNoStar}[1]{\oldsubparagraph{#1}\mbox{}}
\patchcmd\longtable{\par}{\if@noskipsec\mbox{}\fi\par}{}{}
\def\maxwidth{\ifdim\Gin@nat@width>\linewidth\linewidth\else\Gin@nat@width\fi}
\def\maxheight{\ifdim\Gin@nat@height>\textheight\textheight\else\Gin@nat@height\fi}
\def\fps@figure{htbp}
  \renewcommand*\contentsname{Table of contents}
  \newcommand\contentsname{Table of contents}
  \renewcommand*\listfigurename{List of Figures}
  \newcommand\listfigurename{List of Figures}
  \renewcommand*\listtablename{List of Tables}
  \newcommand\listtablename{List of Tables}
  \renewcommand*\figurename{Figure}
  \newcommand\figurename{Figure}
  \renewcommand*\tablename{Table}
  \newcommand\tablename{Table}
\newcommand{\anon}{1}
\begin{document}

\def\spacingset#1{\renewcommand{\baselinestretch}%
{#1}\small\normalsize} \spacingset{1}


\if1\anon
{
  \title{\bf MMDCP: A Distribution-free Approach to Outlier Detection and Classification with Coverage Guarantees and SCW-FDR Control}
  
    \author{\small
    	Youwu Lin\textsuperscript{1,2}, 
    	Xiaoyu Qian\textsuperscript{2,3}, 
    	Jinru Wu\textsuperscript{2}, 
    	Qi Liu\textsuperscript{2}, 
    	Pei Wang\textsuperscript{4}\thanks{Corresponding author, Email: pei.wang@gdufs.edu.cn}\\[6pt]
    	\small \textsuperscript{1} Guanghua School of Management, Peking University, 100871, Beijing, P.R.China\\
    	\small \textsuperscript{2} School of Mathematics and Computing Science, \\
    	\small Guilin University of Electronic Technology, 541002, Guilin, P.R.China\\
    	\small \textsuperscript{3} College of Energy and Power Engineering, \\ \small Nanjing University of Aeronautics and Astronautics, 210016, Nanjing, P.R.China\\
    	\small \textsuperscript{4} School of Business, Guangdong University of Foreign Studies, 510006, Guangzhou, P.R.China
    }

  \maketitle
} \fi

\if0\anon
{
  \bigskip
  \bigskip
  \bigskip
  \begin{center}
    {\LARGE\bf Title}
\end{center}
  \medskip
} \fi

\bigskip
\begin{abstract}
We propose the Modified Mahalanobis Distance Conformal Prediction (MMDCP), a unified framework for multi-class classification and outlier detection under label shift, where the training and test distributions may differ. In such settings, many existing methods construct nonconformity scores based on empirical cumulative or density functions combined with data-splitting strategies. However, these approaches are often computationally expensive due to their heavy reliance on resampling procedures and tend to produce overly conservative prediction sets with unstable coverage, especially in small samples. To address these challenges, MMDCP combines class-specific distance measures with full conformal prediction to construct a score function, thereby producing adaptive prediction sets that effectively capture both inlier and outlier structures. Under mild regularity conditions, we establish convergence rates for the resulting sets and provide the first theoretical characterization of the gap between oracle and empirical conformal $p$-values, which ensures valid coverage and effective control of the class-wise false discovery rate (CW-FDR). We further introduce the Summarized Class-Wise FDR (SCW-FDR), a novel global error metric aggregating false discoveries across classes, and show that it can be effectively controlled within the MMDCP framework. Extensive simulations and two real-data applications support our theoretical findings and demonstrate the advantages of the proposed method.

\end{abstract}

{\it Keywords:} Modified Mahalanobis distance, Full conformal inference, Label shift, Set-valued prediction, Multiple hypothesis testing
\vfill

\newpage
\spacingset{1.8} 

\section{Introduction}
We address the joint problem of multi-class classification and outlier detection using a conformal inference framework. This task arises in domains such as finance \citep{bao2020detecting, hilal2022financial}, medicine \citep{zhao2021anomaly, gu2023lightweight}, and network security \citep{mothukuri2021federated}. Classical statistical models rely on restrictive assumptions \citep{markou2003novelty}, whereas machine learning methods offer flexibility but they often lack uncertainty quantification and finite-sample guarantees \citep{sadinle2019least, liang2024integrative}. These shortcomings underscore the need for approaches that are both assumption-free and capable of delivering rigorous uncertainty control in multi-class settings with outliers.

Various set-valued prediction methods based on the conformal prediction framework \citep{vovk2005algorithmic, shafer2008tutorial} have been developed to address these limitations. The core idea of conformal prediction is to construct a nonconformity score function under the assumption of data exchangeability. The choice of score function is therefore critical, as it directly governs predictive performance \citep{chernozhukov2021distributional, angelopoulos2023conformal}. Numerous approaches have been proposed to obtain high-performance nonconformity scores. Machine learning–based methods leverage flexible models to define such scores \citep{bates2023testing, liang2024integrative}, but their performance can be unstable across datasets due to model sensitivity. Another line of work employs density-level set methods, which estimate nonconformity scores via density estimation \citep{cadre2006kernel, rigollet2009optimal, lei2013distribution, sadinle2019least}. However, these approaches suffer from the curse of dimensionality \citep{gu2013nonparametric} and ignore cross-class comparisons, often resulting in degraded classification performance \citep{guan2022prediction}.

To address the limitations of density-level set methods, \citet{guan2022prediction} proposed Balanced and Conformal Optimized Prediction Sets (BCOPS), which use a density ratio as the nonconformity score. However, BCOPS requires random sample splitting and binary classification for ratio estimation, resulting in quadratic computational growth with the number of classes \citep{wu2025class, ding2023class}. In addition, sample splitting not only reduces the efficiency of data use, which leads to lower predictive accuracy and wider prediction intervals \citep{linusson2017calibration}, but also induces instability across random splits, a phenomenon commonly referred to as the “$p$-value lottery” \citep{meinshausen2009p}. To alleviate this, \citet{han2024conformalized} developed the Conformalized Semi-Supervised Random Forest (CSForest) for simultaneous classification and outlier detection. Nevertheless, both BCOPS and CSForest rely on probability or density estimation, which is notoriously difficult in high dimensions \citep{wu2010robust,zhang2013effect}.

We propose a novel method, Modified Mahalanobis Distance Conformal Prediction (MMDCP), for simultaneous outlier detection and multi-class classification. MMDCP combines the modified Mahalanobis distance with full conformal prediction to construct an adaptive, model-free score function. Its model-free nature mitigates sensitivity to model specification, a limitation commonly encountered by machine learning–based approaches. In addition, by avoiding nonparametric density estimation, MMDCP overcomes the challenges such procedures face in high-dimensional settings. As a result, MMDCP delivers robust empirical performance across diverse datasets and scenarios. 

Given their capacity to capture geometric relationships among samples, many distance-based methods have been developed for classification and outlier detection, with seminal examples including Fisher discriminant analysis, the naive Bayes classifier \citep{bickel2004some, fan2008high}, and Mahalanobis distance–based approaches \citep{xiang2008learning}. However, most of these methods primarily focus on class assignment while paying limited attention to quantifying predictive uncertainty. To address this limitation, \citet{katsios2024multi} combined Mahalanobis distance with the split conformal prediction framework for multi-label classification, but they also highlighted that the Mahalanobis-based nonconformity score needs further investigation for reliable uncertainty quantification. While their work represents an important step toward integrating distance-based measures with conformal prediction, its scope remains limited.

Our study builds on this direction but differs from \citet{katsios2024multi} in three principal respects. First, we address multi-class classification under distributional shift, whereas their study focuses on multi-label classification. Second, we employ a conformal score function that is distinct from theirs. Third, and most importantly, we establish new theoretical results, including the first formal characterization of the gap between oracle and empirical conformal $p$-values, establish convergence rates for the adaptive prediction sets and we introduce a novel global error metric.

The remainder of the paper is organized as follows. Section 2 introduces the proposed score function and the MMDCP algorithm. Theoretical properties of the proposed method are established in Section 3. Section 4 reports simulation studies designed to evaluate the performance of MMDCP relative to competing approaches and illustrates its practical utility through applications to real datasets. Section 5 concludes with a summary of the main findings. Technical details, proofs, and additional simulation results are provided in the Supplementary Material.

\section{MMDCP: Models and Algorithms}
\subsection{Models and Notation}
Let $\{(Y_i, X_i)\}^n_{i=1}$ denote a set of independent and identically distributed training samples, where  $Y_i \in \{1, \ldots, K\}$ is a categorical response collected from the $i$-th observation, and $X_i=(X_{ij}) \in \mathbbm{R}^p$ is the associated $p$-dimensional feature vector. Define $n_k = \sum^{n}_{i=1}{\mathbbm{1} \{Y_{i} = k\}}$ as the number of observations in class $k$ and ${\mathbbm{1}}_{\{ \}}$ is the indicator function. Without loss of generality, we assume $X_i$ is centered so that $\mathbbm{E}(X_i)=\mu \in \mathbbm{R}^p$. Furthermore, let $\operatorname{Cov}(X_i)=\Sigma=(\sigma_{j_1 j_2}) \in \mathbbm{R}^{p \times p}$ for some positive definite matrix $\Sigma$. We also consider $m$ out-of-sample observations, denoted by $\{X_i\}^{n+m}_{i=n+1}$.

Assume that $P(Y_i=k)=\pi_k >0$ for $k \in [K] =\{1,\dots,K\}$ with $\sum\limits^{K}_{k=1} \pi_k = 1$. Given $Y_i=k$, assume that $X_i$ follows a probability distribution with a nonparametric density function $f_k(x)$. In case of no outlier contamination, the marginal probability density function of $X_i$ should be given by $f(x) = \sum\limits^{K}_{k=1} \pi_k f_k({x})$. We now consider contamination by a small fraction of outliers with density $f_e(x)$. Let $a_i \in \{0,1\}$ indicate whether $X_i$ is an inlier $(a_i=1)$ or an outlier $(a_i=0)$, with $P(a_i = 0)=\epsilon \in (0,1)$, and assume $Y_i \notin \{1, \ldots, K\}$ if $a_i = 0$. The marginal density with outlier contamination is 
\begin{equation}\label{model_0}
	f(x)=\sum_{k=1}^K {\pi_k^*} f_k(x) + \epsilon \cdot f_e(x), 
\end{equation}
where ${\pi_k^*}= P(a_i = 1, Y_i = k | X_i)$ denotes the inlier mixing probabilities and $\sum\limits^{K}_{k=1} {\pi_k^*} + \epsilon = 1$. Following
\cite{guan2022prediction} and \cite{han2024conformalized}, we refer to Equation \eqref{model_0} as a Generalized Label Shift (GLS) distribution.

The objective of this paper is to simultaneously perform outlier detection $P(a_i=0 | X_i)$ and classification $P(Y_i=k | X_i, a_i=1)$. Based on Model \eqref{model_0}, two main classes of methods have been developed in the literature. The first type of method estimates the probability density functions not only for each inlier class but also for outliers 
\citep{lei2013distribution, sadinle2019least} . The second type of method converts the probability density estimation problem into a classification problem, so that both the estimation tasks of $P(a_i=0 | X_i)$ and $P(Y_i=k | X_i, a_i=1)$ can be accomplished simultaneously \citep{guan2022prediction, han2024conformalized}. However, both types of approaches suffer from the curse of dimensionality \citep{wu2010robust, zhang2013effect}. Therefore, we are motivated to develop a new method that eliminates the need for density or conditional probability estimation, while making no distributional or model assumptions. To this end, we provide an introduction to conformal prediction in the next subsection, which offers a promising solution to these challenges.

\subsection{Conformal Prediction}
Conformal prediction is a flexible framework that constructs prediction sets with a guaranteed coverage level. The procedure involves three main steps: (i) specifying a nonconformity score function; (ii) determining a cutoff via conformal inference, and (iii) assigning new observations to the prediction set based on their scores.  Specifically, let  $s_k(X)$ denote the nonconformity score for class $k$. For an unlabeled out-of-sample observation $X$, the conformal prediction set at level $\alpha$ is defined as
\begin{equation}\label{p_value}
	C(X) = \Big\{k\Big|\frac{1+\sum^{n_k}_{i=1} {\mathbbm{1}}_{\{s_k(X)\le s_k({X}^k_i)\}}}{n_k+1} > \frac{\lfloor (n_k+1)\alpha \rfloor}{n_k+1}\Big\},
\end{equation}
where ${X}^k_i$ is the $i$-th inlier of class $k$. The left-hand side of the inequality corresponds to the conformal $p$-value.

Conformal prediction has been widely applied in regression \citep{lei2018distribution} and classification \citep{lei2013distribution} to provide valid uncertainty quantification. Recently, \cite{bates2023testing} extended this framework to outlier detection. They integrated high-performance one-class classifiers with the conformal prediction framework to construct conformal $p$-value. This $p$-value is then used for hypothesis testing for any $X_i\in \{X_{n+1},\ldots, X_{n+m}\}$, specifically:
\begin{equation*}
	{\boldsymbol H}_{0}: {X_i}\sim F  \ \ \ \ \ \ \ \text{vs} \ \ \ \ \ \ \ {\boldsymbol H}_{1}: {X}_i \nsim
	F, 
\end{equation*}
where $F$ is the distribution function of inliers. When multiple samples are tested, the Benjamini–Hochberg procedure is  employed to control the false discovery rate. 

Building on their pioneering contributions, which, however, remain restricted to single-class classification and outlier detection, we extend this framework to the multi-class setting. For any test sample $X_i\in \{X_{n+1},\dots, X_{n+m}\}$, we consider $K$ independent hypotheses:
\begin{equation*}
	{\boldsymbol H}^k_{0,t}:{X_i}\sim F_k  \ \ \ \ \ \ \ \text{vs}   \ \ \ \ \ \ \  {\boldsymbol H}^k_{1,t}: {X}_i \nsim
	F_k, ~~~  k \in \{1,\dots,K\}
\end{equation*}
where $F_k$ denotes the distribution function of class $k$ with mean vector $\mu_k$ and positive definite covariance ${\Sigma_k}$. To address this problem, we propose MMDCP, which combines a novel nonconformity score based on a modified Mahalanobis distance with a multiple-testing framework, and we establish its theoretical properties to demonstrate its validity and effectiveness.

\subsection{Nonconformity Score Based on Modified Mahalanobis Distance}

Our method is inspired by the classic Mahalanobis distance \citep{de2000mahalanobis}, defined as $d^2(X_i,\mu)=(X_i-\mu)^T\Sigma^{-1}(X_i-\mu)$, where $\mu$ and $\Sigma$ are the population mean vector and population covariance matrix of the feature vector. 
In practice, these quantities are estimated by the sample mean $\bar{X}$ and the sample covariance matrix $S$. However, in high-dimensional settings, the large dimensionality relative to the sample size renders $S$ singular and non-invertible, thereby complicating downstream analyses.
Moreover, because these estimators are highly sensitive to outliers, they frequently produce unreliable estimates, which can in turn substantially degrade empirical performance.

The classical Mahalanobis distance is defined within a single-class context, where it measures the deviation of an observation from its class center, and observations with large distances are typically flagged as outliers. In multi-class settings, however, the standard Mahalanobis distance instead measures deviation from the overall mixture center rather than from individual class centers. Consequently, although it can still detect outliers, it fails to enable class-specific prediction. To overcome this limitation, we extend the Mahalanobis distance to jointly support outlier detection and multi-class classification in high-dimensional settings. This extension, in turn, requires addressing two key challenges: adapting the distance to high-dimensional data and generalizing it for multi-class prediction.

Motivated by these challenges, we propose a modified Mahalanobis distance tailored to the multi-class classification setting. Recall that $X_{i} \in \mathbbm{R}^p$ is the feature vector of the $i$-th instance, and $Y_i \in \{1, \ldots, K\}$ its class label. Define $\mathcal B_0 =\{i: n+1 \leq i \leq n+m \}$ for the out-of-sample observations and $\mathcal B_k =\{1 \leq i \leq n | y_i = k\}$ for training samples of class $k$. Let $\mathbbm{X}_{(\mathcal S)}\in \mathbbm{R}^{({n_k}+{m)\times p}}$ denote the submatrix corresponding to $\mathcal S = \mathcal B_k \cup \mathcal B_0$. The modified Mahalanobis distance is defined as:
\begin{equation}\label{modified_Mahalanobis}
	d_m^2(X_i,\mu_k)=({X_i}- {\mu}_k)^T{diag({\Sigma_k})}^{-1}({X_i}- {\mu}_k),
\end{equation}
where ${X_i}$ is the $i$-th row of  $\mathbbm{X}_{(\mathcal S)}$, and $\mu_k$ and ${\Sigma_k}$ denote the mean vector and covariance matrix of class $k$. 

The modified Mahalanobis distance \eqref{modified_Mahalanobis} can equivalently be written as:
\begin{equation}\label{sx_tilde_k} {s_o}_{k} (X_i)= \left\|\tilde{{X}}_i\right\|^2_2, ~~ \tilde{{X}}_i = (diag({\Sigma_k}))^{-\frac{1}{2}}({X_i}- {\mu}_k). 
\end{equation}
We refer to $ {s_o}_{k}(X_i)$ as the oracle nonconformity score. Intuitively, it quantifies the distance between
${X_i}$ and the center of class $k$,
rather than the overall mixture. Observations with small oracle nonconformity scores are therefore more likely to belong to class $k$. 

In practice, the true mean ${\mu}_k$ and covariance ${\Sigma_k}$ are unknown, so the oracle nonconformity score cannot be computed directly. We therefore estimate ${\mu}_k$ and $diag({\Sigma_k})$ by 
$\hat{{\mu}}_k$ and $\hat{\mathit S}_k$, and define the empirical nonconformity score:
\begin{equation}\label{es_hat_k} \hat{s}_{k}(X_i)=\left\|\hat{ X}_i\right\|^2_2, ~~\hat{X}_i = (\hat{\mathit S}_k)^{-\frac{1}{2}}({X_i}- \hat{\mu}_k),
\end{equation}
where $\hat{{\mu}}_k=
{n^{-1}_k} \sum\limits^{n_k}_{i=1} {X}^k_i $, ${{X}^k_i}$ denotes the 
$i$-th row of $\mathbbm{X}_{(\mathcal S)}$ for class $k$, and $\hat{\mathit S}_k = diag(\hat{\sigma}_1^2,\ldots,\hat{\sigma}_p^2)$ with $\hat{\sigma}_j^2= {(n_k-1)}^{-1} \sum\limits^{n_k}_{i=1} (x^k_{i,j}-\hat{\mu}_{k,j})^2$ for $j\in \{1,\dots,p\}$. A small value of $\hat{s}_{k}(X_i)$ indicates that $X_i$ likely belongs to class $k$, whereas a large value suggests the opposite. By directly reflecting class membership, this score naturally separates inliers from outliers and thereby facilitates simultaneous outlier detection and classification.

It is worth noting that when computing $\hat{s}_{k}(X_i)$, only the diagonal elements of $\Sigma_k$ need to be estimated, which avoids the challenges of full covariance estimation in high-dimensional settings. Nevertheless, as dimensionality increases, $\hat{s}_{k}(X_i)$ can grow rapidly and even diverge \citep{hall2005geometric,giraud2021introduction}, making it difficult to determine a suitable threshold for outlier detection. Moreover, the nonconformity score assigns only a single class label to each observation, without any measure of confidence. As a result, it provides little guidance on the reliability of the prediction. To address this shortcoming, we propose a method that not only performs classification and outlier detection but also quantifies predictive uncertainty.

\subsection{MMDCP: Modified Mahalanobis Distances in Conformal Prediction}
Given the labeled data $\{(Y_i, X_i)\}^n_{i=1}$ and out-of-sample observations $\{X_i\}^{n+m}_{i=n+1}$, MMDCP performs $K$ independent hypothesis tests for each test observation. For the $k$-th test, nonconformity scores $\hat{s}_{k}(X_1),\dots, \hat{s}_{k}(X_{n_k+m})$ are computed using \eqref{es_hat_k}, where $X_i$ for $i= 1,\dots,n_k$ are in-class training samples and $i= n+1, \dots, n+m$ are out-of-sample observations. The conformal $p$-value for each test sample $X_i$ is \begin{equation}\label{pi_value}
	\hat{p}^i_{k}= \frac{1+\sum^{n_k}_{l=1} {\mathbbm{1}}_{\{\hat{s}_{k}(X_i)\le \hat{s}_{k}({X}^k_l)\}} }{n_k+1}.
\end{equation}
Comparing $\hat{p}^i_{k}$ 
with a significance level $\alpha \in (0,1)$, we classify $X_i$ as class $k$ if $\hat{p}^i_{k} > \alpha$, and rejected otherwise.

When predicting multiple test samples, MMDCP controls the false discovery rate (FDR) by applying the Benjamini–Hochberg (BH) procedure to the conformal $p$-values $\hat{p}^i_{k}$.
This adjustment yields corrected values $\tilde{p}^{i}_{k}$ for $i\in \{n+1,\ldots,n+m\}$. A test sample $X_i$ is then assigned to class $k$ if  $\tilde{p}^{i}_{k} > \alpha$; otherwise, it is rejected. Let $\hat{A}_k$ denote the acceptance region for class $k$. The resulting prediction set for $X_i$ is defined as
$$\hat{C}(X_i) = \{k| {X_i\in \hat{A}_k}\}.$$
Outlier detection and classification are directly determined by the size of the prediction set $\hat{C}(X_i)$:
\begin{itemize}
	\item Case 1: If $|\hat{C}(X_i)|=0$, $X_i$ is rejected by all classes and labeled as an outlier;
	\item Case 2: If $|\hat{C}(X_i)|=1$, $X_i$ is classified as an inlier with a unique class label;
	\item Case 3: If $|\hat{C}(X_i)|>1$,  $X_i$ may belong to multiple classes. In this case, although the exact label is ambiguous, $\hat{C}(X_i)$ is expected to include the true class.
\end{itemize}
Algorithm \ref{algorithm1} summarizes the MMDCP procedure.
\begin{algorithm}[!htp]
	
	\baselineskip=1.0em
	\caption{\small Modified Mahalanobis Distances in Conformal Prediction (MMDCP)}
	\label{algorithm1}
	\vspace{0.5em}
	\begin{algorithmic}
		\small
		\Require Labeled data $\{(Y_i, X_i)\}^n_{i=1}$, test set $\{X_i\}^{n+m}_{i=n+1}$ and significance level $\alpha$;
		
		\Ensure The prediction set $\hat{C}(X_i)$, for each $X_i\in \{X_{n+1},\dots, X_{n+m}\}$.
		\For{$k=1,\ldots,K$}
		\State Step 1: Calculate nonconformity scores $\hat{s}_{k}(X_i)$ using \eqref{es_hat_k} for all $i \in \{\mathcal B_k \cup \mathcal B_0\}$;
		
		\State Step 2: Compute standard conformal $p$-value $\hat{p}^{i}_{k}$ using \eqref{pi_value}, for all $i \in \mathcal B_0$;
		\State Step 3:{ \If{$m=1$}
			calculate the acceptance region for class $k$, $$\hat{A}_k = \{X_i| {\hat{{p}}^{i}_{k}> \frac{\lfloor (n_k+1)\alpha \rfloor}{n_k+1}}, i = n_k+1\};$$
			
			\Else
			
			Apply BH procedure to produce the sequence of adjusted conformal $p$-values $\tilde{p}^{i}_{k}$, $i \in \mathcal B_0$ and calculate the acceptance region for class $k$, $$\hat{A}_k = \{X_i| {\tilde{p}^{i}_{k}> \frac{\lfloor (n_k+1)\alpha \rfloor}{n_k+1}}, i \in \mathcal B_0\}.$$
			
			\EndIf}
		\EndFor
		
		\State \Return $\hat{C}(X_i) = \{k| {X_i\in \hat{A}_k}\}$, for each $i \in \mathcal B_0$.
		
	\end{algorithmic}
	\vspace{0.5em}
\end{algorithm}

Compared with existing set-valued classification methods based on conformal prediction, MMDCP avoids probability or density estimation and does not rely on sample splitting. As a result, it makes more efficient use of data and reduces the additional randomness introduced by data partitioning. In addition, MMDCP overcomes the difficulties of threshold selection that often arise in outlier detection. Despite these advantages, a limitation remains: for each test point $X_i$, $i \in \{n+1,\dots, n+m\}$, the transformation $\hat{X}_i = (\hat{\mathit S}_k)^{-\frac{1}{2}}({X_i}- \hat{\mu}_k)$ is an asymmetric function of $X_1, \dots, X_{n_k}, X_i$. Consequently, the resulting nonconformity scores $\hat{s}_{k}(X_1),\dots,\hat{s}_{k}(X_{n_k}), \hat{s}_{k}(X_i)$ are not exchangeable \citep{kuchibhotla2020exchangeability}, which may undermine the validity of MMDCP. 

To address this issue, we introduce an oracle version of the algorithm, MMDCP-oracle, in which the empirical nonconformity score $\hat{s}_{k}(X_i)$ is replaced by the oracle score $s_{ok}(X_i)$ from \eqref{sx_tilde_k}. This substitution ensures that the associated quantities $\hat{p}^{i}_{k}$, $\hat{A}_k$, and $\hat{C}(X_i)$ coincide with their oracle counterparts ${p}^{i}_{ok}$, ${A_o}_k$, and ${C_o}(X_i)$. We then proceed to study the theoretical properties of this oracle version.
\begin{theorem}
\label{theorem1}
	For each $k \in [K]$, if the training data $\mathbbm{X}_{(\mathcal B_k)}$ and the out-of-sample observation $X_i$ are exchangeable and $\boldsymbol{H}^k_{0,t}$ holds, then the oracle conformal $p$-value ${p}^{i}_{ok}$ from the MMDCP-oracle satisfies
	\begin{equation*}
		\mathbbm{P}_k\!\left({p}^{i}_{ok} \le \alpha\right) \le \alpha.
	\end{equation*}
\end{theorem}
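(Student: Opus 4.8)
The plan is to reduce the claim to the classical validity argument for conformal $p$-values, whose only prerequisite is exchangeability of the scores. The decisive point—and the very reason the oracle version is introduced—is that $s_{ok}(\cdot)=\|(diag(\Sigma_k))^{-1/2}(\cdot-\mu_k)\|_2^2$ is a \emph{fixed} deterministic map, because $\mu_k$ and $\Sigma_k$ are population quantities rather than data-dependent estimates. Unlike the empirical score $\hat s_k$, which is built from $\hat\mu_k$ and $\hat S_k$ and is therefore an asymmetric function of the sample, $s_{ok}$ does not depend on the observations at all; this is precisely what repairs the non-exchangeability flagged in the paragraph preceding the theorem.

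First I would fix $k$ and work under $\boldsymbol{H}^k_{0,t}$, so that $X_i\sim F_k$, the common law of the class-$k$ training points $X^k_1,\dots,X^k_{n_k}$. By the hypothesis of the theorem, $X^k_1,\dots,X^k_{n_k},X_i$ are exchangeable, and applying the single fixed measurable map $s_{ok}$ to each of these vectors preserves exchangeability of the resulting scalars; hence the $n_k+1$ scores $s_{ok}(X^k_1),\dots,s_{ok}(X^k_{n_k}),s_{ok}(X_i)$ are exchangeable real-valued random variables.

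Next I would convert the $p$-value into a rank. Writing
$$R=1+\sum_{l=1}^{n_k}{\mathbbm 1}_{\{s_{ok}(X_i)\le s_{ok}(X^k_l)\}},$$
so that ${p}^{i}_{ok}=R/(n_k+1)$, the quantity $R$ counts how many of the $n_k+1$ scores are at least $s_{ok}(X_i)$, the ``$+1$'' being the test point itself. In the tie-free case, exchangeability makes the rank of $s_{ok}(X_i)$ uniform on $\{1,\dots,n_k+1\}$, so $R$ is uniform on $\{1,\dots,n_k+1\}$ as well. Consequently
$$\mathbbm{P}_k\!\left({p}^{i}_{ok}\le\alpha\right)=\mathbbm{P}_k\!\left(R\le\lfloor(n_k+1)\alpha\rfloor\right)=\frac{\lfloor(n_k+1)\alpha\rfloor}{n_k+1}\le\alpha,$$
which is exactly the assertion (and aligns with the cutoff $\lfloor(n_k+1)\alpha\rfloor/(n_k+1)$ used in Algorithm~\ref{algorithm1}).

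The only genuine care needed is the treatment of ties, which I expect to be the main—though minor—obstacle. Since the score is $\|\cdot\|_2^2$ of a continuous transform, ties occur with probability zero whenever $F_k$ is continuous, and the uniform-rank argument then applies verbatim. More generally, ties enter through the weak inequality ``$\le$'' in the indicator, which can only \emph{inflate} $R$ and hence ${p}^{i}_{ok}$; this renders ${p}^{i}_{ok}$ stochastically no smaller than the tie-free uniform object, so that $\mathbbm{P}_k({p}^{i}_{ok}\le\alpha)\le\alpha$ persists in the conservative direction. This is also the precise point at which the inequality, rather than equality, in the statement originates.
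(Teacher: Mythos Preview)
Your proposal is correct and follows the same approach as the paper's own proof: use that the oracle score $s_{ok}$ is a fixed (data-independent) map so exchangeability of $X^k_1,\dots,X^k_{n_k},X_i$ transfers to the scores, deduce that the rank $R$ is uniform on $\{1,\dots,n_k+1\}$, and conclude $\mathbbm{P}_k({p}^{i}_{ok}\le\alpha)=\lfloor(n_k+1)\alpha\rfloor/(n_k+1)\le\alpha$. The paper's proof is a terse version of exactly this argument, and your added care about ties (yielding super-uniformity rather than exact uniformity) is a correct refinement that the paper omits.
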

\begin{proof}
	Under $\boldsymbol{H}^k_{0,t}$, exchangeability implies that $ {s_o}_{k}(X_i)$ and $ {s_o}_{k}(X_l)$ for $l \in \mathcal B_k$ are exchangeable. Consequently, the rank of $ {s_o}_{k}(X_i)$ among $\mathcal B_k \cup \{i\}$ is uniformly distributed. It follows that ${p}^{i}_{ok} \sim \mathrm{Uniform}(1/(n_k + 1), 2/(n_k + 1), \ldots, 1)$, which directly yields 	
	\begin{equation*}
		{\mathbbm{P}_k}({p}^{i}_{ok} \leq \alpha)\le \alpha, \ \ \forall k\in [K].
	\end{equation*}
	Hence the oracle conformal $p$-value ${p}^{i}_{ok}$ is valid for testing ${\boldsymbol H}^k_{0,t}:X_i \sim F_k$.
\end{proof}
Under the stated assumptions, the exchangeability of the data together with Theorem~\ref{theorem1} implies that MMDCP-oracle achieves a finite-sample coverage guarantee:

\begin{corollary}\label{theorem2}
	Under the assumptions of Theorem \ref{theorem1}, let ${C_o}(X_i)$ be the prediction set computed by MMDCP-oracle. Then, for all $ k\in [K],$
	\begin{equation}
		\notag
		{\mathbbm{P}_k}(k\in {C_o}(X_i))\ge 1-\alpha.
	\end{equation}	
\end{corollary}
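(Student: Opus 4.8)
The plan is to reduce the coverage statement to the validity bound on the oracle conformal $p$-value supplied by Theorem~\ref{theorem1}, so that the corollary follows essentially by taking complements. First I would unwind the definition of the MMDCP-oracle prediction set: by construction ${C_o}(X_i)$ collects exactly those classes $k$ for which $\boldsymbol{H}^k_{0,t}$ is not rejected, i.e. $k \in {C_o}(X_i)$ iff $X_i \in {A_o}_k$, and for the single test point case the acceptance region is ${A_o}_k = \{X_i \mid {p}^{i}_{ok} > \lfloor (n_k+1)\alpha \rfloor/(n_k+1)\}$. Hence the events $\{k \in {C_o}(X_i)\}$ and $\{{p}^{i}_{ok} > \lfloor (n_k+1)\alpha \rfloor/(n_k+1)\}$ coincide, and it suffices to bound the complementary probability $\mathbbm{P}_k({p}^{i}_{ok} \le \lfloor (n_k+1)\alpha \rfloor/(n_k+1))$.

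Next I would work in the regime of $\mathbbm{P}_k$, where $X_i \sim F_k$ and therefore $\boldsymbol{H}^k_{0,t}$ holds, which is exactly the hypothesis under which Theorem~\ref{theorem1} is valid. Since the discretized cutoff satisfies $\lfloor (n_k+1)\alpha \rfloor/(n_k+1) \le \alpha$, the event $\{{p}^{i}_{ok} \le \lfloor (n_k+1)\alpha \rfloor/(n_k+1)\}$ is contained in $\{{p}^{i}_{ok} \le \alpha\}$, so monotonicity of probability together with Theorem~\ref{theorem1} gives $\mathbbm{P}_k({p}^{i}_{ok} \le \lfloor (n_k+1)\alpha \rfloor/(n_k+1)) \le \mathbbm{P}_k({p}^{i}_{ok} \le \alpha) \le \alpha$. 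Equivalently, the uniform law of ${p}^{i}_{ok}$ derived in the proof of Theorem~\ref{theorem1} makes this probability exactly $\lfloor (n_k+1)\alpha \rfloor/(n_k+1)$. Passing to the complement then yields $\mathbbm{P}_k(k \in {C_o}(X_i)) \ge 1 - \lfloor (n_k+1)\alpha \rfloor/(n_k+1) \ge 1 - \alpha$, as claimed.

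The only step needing genuine care is the multiple-testing case $m>1$, where ${A_o}_k$ is defined through the BH-adjusted oracle $p$-values $\tilde{p}^{i}_{ok}$ rather than the raw ${p}^{i}_{ok}$. Here I would invoke the elementary monotonicity of the BH adjustment, namely $\tilde{p}^{i}_{ok} \ge {p}^{i}_{ok}$ pointwise, which implies that the acceptance region built from $\tilde{p}^{i}_{ok}$ contains the one built from the raw $p$-values. Consequently the event $\{k \in {C_o}(X_i)\}$ only becomes more likely under BH, so the $1-\alpha$ lower bound established for the raw $p$-values is preserved. I expect this BH-monotonicity observation to be the main (indeed essentially the only) obstacle; everything else is an immediate consequence of Theorem~\ref{theorem1} and the definition of the prediction set.
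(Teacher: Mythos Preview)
Your proposal is correct and follows the same route the paper implicitly uses: the paper does not give a separate proof of this corollary but simply remarks that exchangeability together with Theorem~\ref{theorem1} yields the coverage guarantee, i.e.\ it treats the statement as the immediate complement of $\mathbbm{P}_k({p}^{i}_{ok}\le\alpha)\le\alpha$. Your write-up makes this explicit by unwinding the definition of ${A_o}_k$ and handling the discretized threshold, and it goes slightly beyond the paper by addressing the $m>1$ case via the pointwise monotonicity $\tilde{p}^{i}_{ok}\ge {p}^{i}_{ok}$ of the BH adjustment, a point the paper does not spell out.
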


Corollary~\ref{theorem2} establishes that the oracle prediction set $C_o(X_i)$ enjoys finite-sample coverage of at least $1-\alpha$ for each class, thereby providing a rigorous theoretical foundation for using oracle MMDCP to quantify prediction uncertainty in multi-class settings.

\section{Theoretical results of the MMDCP}

In this section, we rigorously establish the convergence of empirical conformal $p$-values to their oracle counterparts and derive the corresponding  convergence rates for the prediction sets $\hat{C}(X_i)$ and $C_o(X_i)$. We further establish rigorous control of the class-wise false discovery rate (CW-FDR) and propose a novel multiple testing criterion, SCW-FDR. By explicitly accounting for heterogeneity across classes, SCW-FDR extends traditional FDR control to the multi-class setting, thereby providing a principled framework that enhances both interpretability and the rigor of error quantification. We show that SCW-FDR can be effectively controlled within the MMDCP framework for outlier detection, offering both theoretical validity and practical utility. To this end, we first formalize a set of standard regularity assumptions, closely following \cite{ro2015outlier, van2014asymptotically, zhang2017simultaneous, fan2020statistical,guan2022prediction}.

\begin{assumption}\label{assump1}
	For $i=1,2$, $0 < \lim\limits_{p \to \infty} \frac{\text{tr}(R_k^i)}{p} < \infty$.
\end{assumption}

\begin{assumption}\label{assump2}
	$\max\limits_{1 \leq j \leq p} |X_{ij}| = O_p(1)$ for all $i$.
\end{assumption}

\begin{assumption}\label{assump3} There exist a positive constant $c_1$ such that 
	$\min\limits_{1 \leq j \leq p}\sigma^{2}_{kj} > c_1, \forall k \in [K]$.
\end{assumption}

\begin{assumption}\label{assump4}
	Densities $f_{1}(X), \ldots, f_{K}(X), f_e(X)$ are upper bounded by a constant.
\end{assumption}

Assumption \ref{assump1} requires that, for each class $i \in {1,2}$, the trace of the correlation matrix $R_k^i$ grows at most linearly with the dimension $p$, so that $\mathrm{tr}(R_k^i)/p$ remains bounded as $p \to \infty$, see also \cite{ro2015outlier} for a similar assumption. Assumption \ref{assump2} is a regularity assumption in high-dimensional analysis \citep{van2014asymptotically,zhang2017simultaneous}, which imposes boundedness in probability on the covariates. Assumption~\ref{assump3} requires that marginal variances are uniformly bounded away from zero, ensuring the invertibility of $\mathrm{diag}(\Sigma_k)$ \citep{fan2020statistical}. Finally, Assumption~\ref{assump4} assumes that the class-conditional and test densities  
$f_{1}(X), \ldots, f_{K}(X),$ $f_e(X)$ are uniformly bounded, a common condition that guarantees integrability and enables probabilistic error control \citep{guan2022prediction}.

\subsection{Theoretical Properties of the MMDCP}
While Corollary \ref{theorem2} guarantees finite-sample coverage for the oracle MMDCP, its implementation depends on unknown population quantities, including the mean vector and diagonal covariance elements, which must be estimated from data. To address this, Algorithm \ref{algorithm1} provides an empirical version of MMDCP. In this subsection, we analyze how the empirical conformal $p$-values relate to their oracle counterparts and establish the convergence of the resulting prediction sets under appropriate regularity conditions. The theorem below establishes a non-asymptotic bound on the deviation between the oracle $p$-value $p_{ok}$ and its empirical counterpart $\hat{p}_k$.
\begin{theorem}\label{theorem2a}
	Suppose that Assumptions \ref{assump1} - \ref{assump3} hold, for $a \geq 2$, $n_k \geq 3$,  we have	\begin{equation}\label{eq2_8}
		\mathbbm{P}_k(|p_{ok}-\hat{p}_k| \geq t_k) \leq 2n_k^{-a},
	\end{equation}
	where $t_k = 4\lambda_1\sqrt{\frac{\log (n_k) }{n_k}}$ and $\lambda_1=\sqrt{a}+ \frac{2a}{3}$.
\end{theorem}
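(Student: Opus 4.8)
The plan is to reduce the deviation between the two conformal $p$-values to a count of ``rank flips'' and then control that count by Bernstein-type concentration over the $n_k$ in-class training scores. Writing $Z_l = \mathbbm{1}_{\{s_{ok}(X_i)\le s_{ok}(X^k_l)\}}$ and $\hat Z_l = \mathbbm{1}_{\{\hat s_k(X_i)\le \hat s_k(X^k_l)\}}$, the definition \eqref{pi_value} gives the elementary bound
$$|p_{ok}-\hat p_k| \le \frac{1}{n_k+1}\sum_{l=1}^{n_k}\mathbbm{1}_{\{Z_l\ne \hat Z_l\}},$$
so it suffices to show that the proportion of indices whose pairwise comparison changes sign when the oracle score is replaced by the empirical one is at most $t_k$ outside an event of probability $2n_k^{-a}$. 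The first step is to expand the pairwise gap and subtract its oracle analogue: using $(x-\hat\mu)^2-(y-\hat\mu)^2=(x-y)(x+y-2\hat\mu)$ coordinatewise, the perturbation $\Delta_{il}:=[\hat s_k(X_i)-\hat s_k(X^k_l)]-[s_{ok}(X_i)-s_{ok}(X^k_l)]$ becomes a linear functional of the estimation errors $\hat\mu_{kj}-\mu_{kj}$ and $\hat\sigma^2_{kj}-\sigma^2_{kj}$, weighted by the coordinates of $X_i$ and $X^k_l$ (bounded by Assumption~\ref{assump2}) and normalized by $\sigma^2_{kj}$, which Assumption~\ref{assump3} keeps away from zero.

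Second, I would bound $\max_l|\Delta_{il}|$ by some $\eta_k$. The mean errors $\hat\mu_{kj}-\mu_{kj}$ and variance errors $\hat\sigma^2_{kj}-\sigma^2_{kj}$ are averages of $n_k$ bounded, mean-zero summands, so Bernstein's inequality controls each coordinate at rate $\sqrt{\log n_k/n_k}$, while Assumption~\ref{assump1}, by forcing $\mathrm{tr}(R_k)$ and $\mathrm{tr}(R^2_k)$ to grow linearly in $p$, controls the variance of the resulting inner product over the $p$ coordinates and yields $\eta_k \asymp \sqrt{p\,\log n_k/n_k}$ with tail $n_k^{-a}$. The crucial cancellation is that the oracle score $s_{ok}(X^k_l)=\|\tilde X_l\|_2^2$ is a sum of $p$ weakly dependent standardized terms whose spread is of order $\sqrt{\mathrm{tr}(R_k^2)}\asymp\sqrt p$ (again Assumption~\ref{assump1}), so its density near $s_{ok}(X_i)$ is of order $1/\sqrt p$; since a sign flip for index $l$ forces $|s_{ok}(X^k_l)-s_{ok}(X_i)|\le|\Delta_{il}|\le\eta_k$, the $\sqrt p$ factor in $\eta_k$ is exactly offset by the $1/\sqrt p$ density.

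Third, I would translate the flip condition into the number of in-class scores lying in a band of width $2\eta_k$ around $s_{ok}(X_i)$. Conditionally on $X_i$ and on the estimated parameters, and after a leave-one-out adjustment removing the dependence of $\hat\mu_k,\hat\sigma^2_{kj}$ on $X^k_l$, these indicators are averages of bounded variables with mean of order $\eta_k/\sqrt p\asymp\sqrt{\log n_k/n_k}$, where boundedness of the limiting score density invokes Assumption~\ref{assump4}. A final application of Bernstein's inequality to this average — with the variance proxy and the $2/3$ bounded-range term producing exactly $\lambda_1=\sqrt a+\tfrac{2a}{3}$ — gives $|p_{ok}-\hat p_k|\le 4\lambda_1\sqrt{\log n_k/n_k}=t_k$ outside an event of probability $2n_k^{-a}$, which is \eqref{eq2_8}.

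The main obstacle is the high-dimensional bookkeeping in the second and third steps: one must show that the $\sqrt p$ growth of the score perturbation cancels against the $1/\sqrt p$ decay of the oracle score density so that $p$ disappears from the final rate, and this cancellation is precisely what Assumption~\ref{assump1} (linear growth of $\mathrm{tr}(R_k)$ and $\mathrm{tr}(R_k^2)$) is engineered to deliver. A secondary but genuine difficulty is the dependence between the plug-in estimators $\hat\mu_k,\hat\sigma^2_{kj}$ and the very scores $s_{ok}(X^k_l)$ being ranked; I would handle this by conditioning on the estimated parameters and invoking a leave-one-out / exchangeability argument so that the flip count remains a sum of conditionally independent bounded terms to which Bernstein applies.
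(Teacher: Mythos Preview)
The paper relegates its proof to Appendix~A.2 in the Supplementary Material, so only the statement, the assumption list, and the exact constants are available for comparison. Your overall architecture---rank-flip count, band argument, Bernstein to close---is the natural one, and the fingerprint $\lambda_1=\sqrt a+\tfrac{2a}{3}$ together with the factor $4$ in $t_k$ is consistent with two Bernstein applications (one for the perturbation, one for the band count) followed by a union bound, so at the level of strategy you are on the right track.

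There is, however, a real gap in your third step. You write that ``boundedness of the limiting score density invokes Assumption~\ref{assump4}.'' It cannot: Assumption~\ref{assump4} bounds the densities $f_1,\dots,f_K,f_e$ on $\mathbbm R^p$, not the one-dimensional density of the score $s_{ok}(X)=\|\tilde X\|_2^2$, and a bounded joint density in $\mathbbm R^p$ does \emph{not} by itself give a bounded density for a quadratic functional (pushforward densities can blow up). More importantly, the theorem is stated under Assumptions~\ref{assump1}--\ref{assump3} only; Assumption~\ref{assump4} first enters the paper in Theorem~\ref{theorem4a}. So whatever controls the ``band'' probability $\mathbbm P_k\big(|s_{ok}(X^k_l)-s_{ok}(X_i)|\le \eta_k\big)$ must come from Assumptions~\ref{assump1}--\ref{assump3} alone. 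The standard route---and presumably the paper's---is the Ro--Zou asymptotic normality for the modified Mahalanobis distance: under Assumption~\ref{assump1} the standardized score $\big(s_{ok}(X)-p\big)/\sqrt{2\,\mathrm{tr}(R_k^2)}$ has a limit law with bounded density, which supplies exactly the $1/\sqrt p$ density scale you need for the cancellation. You have the right heuristic ($\eta_k\asymp\sqrt{p\log n_k/n_k}$ against density $\asymp 1/\sqrt p$), but you have attributed it to the wrong assumption; fix this by deriving the score-density bound from Assumptions~\ref{assump1}--\ref{assump3} via a Berry--Esseen/CLT argument for $\|\tilde X\|_2^2$, and drop the reference to Assumption~\ref{assump4}.

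A secondary point: in the final Bernstein step you treat the flip indicators as conditionally independent after a leave-one-out correction. Because $\hat\mu_k,\hat S_k$ depend on \emph{all} of $X^k_1,\dots,X^k_{n_k}$, the $\hat Z_l$ are genuinely dependent, and the leave-one-out version of $\hat s_k(X^k_l)$ differs from the full-sample version by terms of order $1/n_k$ in each coordinate. You should verify explicitly that this correction is uniformly $o(\eta_k)$ under Assumptions~\ref{assump2}--\ref{assump3}; otherwise the decoupling is only heuristic and the Bernstein step does not apply as stated.
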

The proof of Theorem \ref{theorem2a} is given in Appendix A.2. The result establishes that, with high probability, the empirical conformal $p$-values concentrate around their oracle counterparts for finite, class-specific sample sizes $n_k$. 
This result, to our knowledge, constitutes the first formal non-asymptotic characterization of empirical–oracle $p$-values in conformal prediction. By providing an explicit finite-sample bound, it establishes a theoretical foundation for subsequent validity and coverage guarantees. Building on Theorems~\ref{theorem1} and~\ref{theorem2a}, we then obtain the following corollary.

\begin{corollary}
\label{corollary0}
	Under the assumptions of Theorems~\ref{theorem1} and~\ref{theorem2a}, if the null hypothesis $\boldsymbol{H}^k_{0,t}$ holds, then for sufficiently large $n_k$, we have
	\begin{equation*}
		\mathbbm{P}_k\!\left(\hat{p}^{i}_{k} \leq \alpha \right) \;\leq\; \alpha, \qquad \forall\, k \in [K], \ \forall\, \alpha \in [0,1].
	\end{equation*}

\end{corollary}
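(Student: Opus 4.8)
The plan is to transfer the validity of the oracle $p$-value established in Theorem~\ref{theorem1} to the empirical $p$-value $\hat{p}^i_k$, using the concentration bound of Theorem~\ref{theorem2a} to control the discrepancy between them. Throughout I fix a class $k$, work under $\boldsymbol{H}^k_{0,t}$, and write $t_k = 4\lambda_1\sqrt{\log(n_k)/n_k}$ for the deviation radius supplied by Theorem~\ref{theorem2a}.

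First I would introduce the high-probability event $E_k = \{|p_{ok}-\hat{p}_k| < t_k\}$, whose complement satisfies $\mathbbm{P}_k(E_k^c) \le 2n_k^{-a}$ directly by Theorem~\ref{theorem2a}. On $E_k$ there is a deterministic inclusion: if $\hat{p}_k \le \alpha$, then $p_{ok} \le \hat{p}_k + t_k \le \alpha + t_k$, so that $\{\hat{p}_k \le \alpha\} \cap E_k \subseteq \{p_{ok} \le \alpha + t_k\}$. This is the step that lets the oracle validity do the work.

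Next I would decompose the target probability over $E_k$ and its complement and apply a union bound, obtaining
\begin{equation*}
	\mathbbm{P}_k(\hat{p}_k \le \alpha) \le \mathbbm{P}_k\big(\{\hat{p}_k \le \alpha\}\cap E_k\big) + \mathbbm{P}_k(E_k^c) \le \mathbbm{P}_k(p_{ok} \le \alpha + t_k) + 2n_k^{-a}.
\end{equation*}
The key input is that under $\boldsymbol{H}^k_{0,t}$ the oracle score is exchangeable, so by the argument of Theorem~\ref{theorem1} the oracle $p$-value is super-uniform on $\{1/(n_k+1),\dots,1\}$; hence $\mathbbm{P}_k(p_{ok}\le\beta)\le\beta$ for every $\beta\in[0,1]$. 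Taking $\beta = \alpha + t_k$ (which lies in $[0,1]$ as soon as $n_k$ is large enough that $t_k \le 1-\alpha$; the case $\alpha=1$ being trivial) yields
\begin{equation*}
	\mathbbm{P}_k(\hat{p}_k \le \alpha) \le \alpha + t_k + 2n_k^{-a}.
\end{equation*}

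Finally, since $t_k \to 0$ and $2n_k^{-a} \to 0$ as $n_k \to \infty$, the residual $t_k + 2n_k^{-a}$ is asymptotically negligible, which is exactly the sense in which the bound collapses to $\mathbbm{P}_k(\hat{p}_k \le \alpha) \le \alpha$ for sufficiently large $n_k$. I expect the main obstacle to be precisely this last reconciliation: the honest finite-sample bound carries the slack $t_k + 2n_k^{-a}$, so the stated inequality $\le\alpha$ must be read in the large-$n_k$ (vanishing-remainder) sense rather than as an exact finite-sample guarantee. One could try to absorb the slack into the discreteness gap $\alpha - \lfloor(n_k+1)\alpha\rfloor/(n_k+1)$ of the oracle $p$-value, but the cleanest and most robust route is the event-splitting above combined with the super-uniformity granted by Theorem~\ref{theorem1}.
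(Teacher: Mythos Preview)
Your approach is correct and is the natural way to combine Theorems~\ref{theorem1} and~\ref{theorem2a}: split on the concentration event, push the empirical $p$-value to the oracle one via the deviation $t_k$, and invoke super-uniformity. This is essentially the argument the paper has in mind, and your honest observation that the resulting bound is $\alpha + t_k + 2n_k^{-a}$ rather than exactly $\alpha$---so that the corollary must be read in the vanishing-remainder sense---is accurate and well taken; the discreteness gap of order $1/(n_k+1)$ cannot absorb a slack of order $\sqrt{\log n_k/n_k}$, so there is no way to recover a strict finite-sample $\le\alpha$ from these two theorems alone.
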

The proof is given in Appendix A.3. Corollary~\ref{corollary0} establishes the asymptotic super-uniformity of the empirical conformal $p$-values, thereby ensuring the validity of MMDCP. Leveraging this result together with the general properties of conformal inference, we can readily derive the following corollary.

\begin{corollary}\label{theorem4}

	Under the assumptions of Theorems~\ref{theorem1} and~\ref{theorem2a}, if the null hypothesis $\boldsymbol{H}^k_{0,t}$ holds, then for sufficiently large $n_k$, we have 
	\begin{equation}
		\notag
		{\mathbbm{P}_k}(k\in \hat{C}(X_i))\ge 1-\alpha, \ \ \forall k\in [K].
	\end{equation}	
\end{corollary}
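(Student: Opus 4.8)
The plan is to derive this empirical coverage guarantee directly from the asymptotic super-uniformity already established in Corollary~\ref{corollary0}, which reduces the claim to a short set-containment argument. Since the statement concerns a single test observation $X_i$, I would work in the $m=1$ branch of Algorithm~\ref{algorithm1}, where by construction $k \in \hat{C}(X_i)$ if and only if $X_i \in \hat{A}_k$, that is, if and only if the empirical conformal $p$-value satisfies $\hat{p}^{i}_{k} > \lfloor (n_k+1)\alpha \rfloor/(n_k+1)$.

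First I would record the elementary bound $\lfloor (n_k+1)\alpha \rfloor/(n_k+1) \le \alpha$, which holds for every $n_k$ and every $\alpha \in (0,1)$. Consequently the coverage event contains the event in which the $p$-value exceeds the level $\alpha$:
\begin{equation*}
	\{k \in \hat{C}(X_i)\} = \Big\{\hat{p}^{i}_{k} > \tfrac{\lfloor (n_k+1)\alpha \rfloor}{n_k+1}\Big\} \supseteq \{\hat{p}^{i}_{k} > \alpha\}.
\end{equation*}
Taking probabilities and passing to complements yields $\mathbbm{P}_k(k \in \hat{C}(X_i)) \ge 1 - \mathbbm{P}_k(\hat{p}^{i}_{k} \le \alpha)$. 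I would then invoke Corollary~\ref{corollary0}, which guarantees $\mathbbm{P}_k(\hat{p}^{i}_{k} \le \alpha) \le \alpha$ for all $k \in [K]$ once $n_k$ is sufficiently large, and substitute to obtain $\mathbbm{P}_k(k \in \hat{C}(X_i)) \ge 1 - \alpha$, exactly the claim. The ``sufficiently large $n_k$'' qualifier is inherited verbatim from Corollary~\ref{corollary0}; no additional largeness is needed in this step.

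The main obstacle does not lie in this final step, which is essentially bookkeeping, but is already absorbed into Corollary~\ref{corollary0}: the delicate part is controlling the gap between $\hat{p}^{i}_{k}$ and the oracle $p^{i}_{ok}$ via the non-asymptotic concentration bound of Theorem~\ref{theorem2a}, then combining it with the exact uniformity of the oracle $p$-value from Theorem~\ref{theorem1}. As a self-contained alternative that avoids citing Corollary~\ref{corollary0}, I could couple the empirical and oracle prediction sets directly: on the event $\{|\hat{p}^{i}_{k} - p^{i}_{ok}| < t_k\}$, which has probability at least $1 - 2n_k^{-a}$ by Theorem~\ref{theorem2a}, empirical acceptance is implied by oracle acceptance up to a threshold shift of $t_k$; Corollary~\ref{theorem2} then supplies the oracle coverage, while the slack $t_k \to 0$ and the failure probability $2n_k^{-a} \to 0$ as $n_k \to \infty$ absorb the two sources of error. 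I would nonetheless prefer the first route, since it is cleaner and isolates the genuine analytic content in the earlier results rather than re-deriving it here.
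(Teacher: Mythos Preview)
Your proposal is correct and follows essentially the same route the paper indicates: the paper states that Corollary~\ref{theorem4} follows directly from Corollary~\ref{corollary0} together with standard conformal prediction reasoning, and your set-containment argument using $\lfloor (n_k+1)\alpha\rfloor/(n_k+1)\le\alpha$ is exactly that reasoning spelled out. The alternative direct-coupling sketch you mention is also valid but, as you note, merely re-derives the content of Corollary~\ref{corollary0}.
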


Corollary~\ref{theorem4} ensures that the MMDCP prediction sets include the true inlier label with probability at least $1-\alpha$, thereby providing valid confidence guarantees for both classification and outlier detection. In addition, the following Theorem~\ref{theorem4a} further establishes that the prediction sets generated by Algorithm~\ref{algorithm1} converge to their oracle counterparts at an explicit, uniform rate that depends only on $K$ and $\alpha$, with high probability.

\begin{theorem}\label{theorem4a}
	Under the assumptions  of Theorems~\ref{theorem1} and~\ref{theorem2a} and Assumption \ref{assump4}, let $a \geq 2$, $k \in [K]$, $n_k \geq 3$, and  $\alpha > 0$. Denote the empirical and oracle prediction sets $\hat C(X)$ and $C_o(X)$, and set
	\[
	I_f:=\int_{\mathbbm R^p}\big|\,|\hat C(X)|-|C_o(X)|\,\big|\,f(X)\,dX,
	\]Then:\begin{enumerate}
		\item[(i)] 
		In the heterogeneous case,
		\[
		\mathbbm{P}\!\left(
		I_f \;\ge\; 
		\frac{2M}{\alpha}\sum_{k=1}^K \Big[\tfrac{1}{n_k+1}+2t_k\Big]
		\right)
		\;\le\; \sum_{k=1}^K 2n_k^{-a}.
		\]
		\item[(ii)] If $n_k \equiv n$ and $t_k \equiv t$ for all $k$, then
		\[
		\mathbbm{P}\!\left(
		I_f\;\ge\; 
		\frac{2MK}{\alpha}\Big[\tfrac{1}{n+1}+2t\Big]
		\right)
		\;\le\; 2K\,n^{-a}.
		\]    
		\item[(iii)] More generally, setting $n_{\min} := \min_{k} n_k$, $t_{\max} := \max_{k} t_k$, $L := \tfrac{2MK}{\alpha}$, we obtain the simplified bound
		\[
		\mathbbm{P}\!\left(
		I_f\;\ge\; 
		L\Big[\tfrac{1}{n_{\min}+1}+2t_{\max}\Big]
		\right)
		\;\le\; \sum_{k=1}^K 2n_k^{-a}.
		\]
	\end{enumerate}
\end{theorem}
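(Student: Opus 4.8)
The plan is to reduce the integrated set-size discrepancy to a sum of per-class boundary mismatches and then control each mismatch region using the $p$-value concentration bound of Theorem~\ref{theorem2a}. First I would write the set sizes as sums of acceptance indicators, $|\hat C(X)| = \sum_{k=1}^K \mathbbm{1}\{X \in \hat A_k\}$ and $|C_o(X)| = \sum_{k=1}^K \mathbbm{1}\{X \in {A_o}_k\}$, and apply the triangle inequality inside the integral to obtain
\[
I_f \le \sum_{k=1}^K \int_{\mathbbm R^p}\big|\mathbbm{1}\{X \in \hat A_k\} - \mathbbm{1}\{X \in {A_o}_k\}\big|\,f(X)\,dX =: \sum_{k=1}^K I_{f,k}.
\]
Since $\hat A_k = \{X: \hat p_k(X) > c_k\}$ and ${A_o}_k = \{X: p_{ok}(X) > c_k\}$ share the common cutoff $c_k = \lfloor(n_k+1)\alpha\rfloor/(n_k+1)$, the integrand of $I_{f,k}$ is nonzero exactly on the symmetric difference $\hat A_k \triangle {A_o}_k$, i.e. where the empirical and oracle $p$-values straddle $c_k$.

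Second, I would invoke Theorem~\ref{theorem2a} in good-event form: for each $k$ there is an event $G_k$ with $\mathbbm P(G_k^c) \le 2n_k^{-a}$ on which $|p_{ok}(X) - \hat p_k(X)| \le t_k$. On $G_k$ a straddle can occur only when the oracle $p$-value lies within $t_k$ of the cutoff, since $p_{ok} > c_k + t_k$ forces $\hat p_k > c_k$ and $p_{ok} \le c_k - t_k$ forces $\hat p_k \le c_k$; hence $(\hat A_k \triangle {A_o}_k) \subseteq \{X: p_{ok}(X) \in (c_k - t_k, c_k + t_k]\}$. Because $p_{ok}(X)$ takes values on the grid $\{1/(n_k+1), \ldots, 1\}$, this band has effective $p$-value content at most $2t_k + 1/(n_k+1)$, the extra $1/(n_k+1)$ absorbing the discretization. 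Thus, on $G_k$, $I_{f,k}$ is bounded by the $f$-mass of this thin shell around the class-$k$ acceptance boundary.

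Third --- and this is the step I expect to be the main obstacle --- I would convert the $p$-value band content into the $f$-measure bound $\tfrac{2M}{\alpha}\big[\tfrac1{n_k+1}+2t_k\big]$. The band $\{p_{ok}(X) \in (c_k - t_k, c_k + t_k]\}$ is an ellipsoidal shell $\{r_- \le s_{ok}(X) < r_+\}$ whose edges are order statistics of $\{s_{ok}(X^k_l)\}$, so its content measured under $F_k$ is controlled by the uniform spacing of the ranks; the difficulty is passing from the $F_k$-mass to the mixture mass $\int f$. Here Assumption~\ref{assump4} supplies the density bound (note $f \le M$, since $f = \sum_k \pi_k^* f_k + \epsilon f_e$ with each component bounded by $M$), while the factor $1/\alpha$ enters through the location of the cutoff $c_k \asymp \alpha$, which lower-bounds the relevant spacing/density of $s_{ok}$ near the boundary. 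Making this change-of-variables (coarea) argument rigorous, and justifying the uniform-in-$X$ form of Theorem~\ref{theorem2a} used above (the deviation being driven by the $X$-free training error of $\hat\mu_k,\hat S_k$), is the delicate part; everything else is bookkeeping.

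Finally, the three displayed bounds follow by aggregation. For part (i) I union-bound the good events, $\mathbbm P(\cup_k G_k^c) \le \sum_k 2 n_k^{-a}$, and on $\cap_k G_k$ sum the per-class bounds to get $I_f \le \tfrac{2M}{\alpha}\sum_k\big[\tfrac1{n_k+1}+2t_k\big]$, whose negation is the asserted event. Part (ii) is the specialization $n_k \equiv n$, $t_k \equiv t$, for which $\sum_k[\tfrac1{n_k+1}+2t_k] = K[\tfrac1{n+1}+2t]$ and $\sum_k 2n_k^{-a} = 2Kn^{-a}$. Part (iii) follows from the monotone bounds $\tfrac1{n_k+1} \le \tfrac1{n_{\min}+1}$ and $t_k \le t_{\max}$, so that $L\big[\tfrac1{n_{\min}+1}+2t_{\max}\big] \ge \tfrac{2M}{\alpha}\sum_k[\tfrac1{n_k+1}+2t_k]$; the event in (iii) is therefore contained in the event of (i) and inherits the same probability bound $\sum_k 2n_k^{-a}$.
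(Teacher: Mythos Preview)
Your scaffolding matches the natural architecture for this kind of result and is consistent with how the paper sets things up: decompose $I_f \le \sum_k I_{f,k}$ via the indicator representation of $|\hat C|$ and $|C_o|$, localize each symmetric difference $\hat A_k \triangle {A_o}_k$ to the $p$-value band $\{p_{ok}(X) \in (c_k - t_k,\, c_k + t_k]\}$ on the good event $G_k$ supplied by Theorem~\ref{theorem2a}, bound the $f$-mass of that band, and union-bound over $k$. Your derivation of parts~(ii) and~(iii) from~(i) by specialization and monotone majorization is correct bookkeeping.

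Where the proposal is incomplete is exactly where you flag it, and your sketches there stop short of a proof. For the $f$-mass step: you correctly bound the band's $F_k$-content by $2t_k + 1/(n_k+1)$ via rank uniformity, but passing to the mixture measure with the factor $2M/\alpha$ requires more than a coarea gesture---one needs Assumption~\ref{assump4} to get $\int_{\mathrm{band}} f \le M \cdot \mathrm{Leb}(\mathrm{band})$ and then a lower bound on $f_k$ over the band to transfer Lebesgue measure back to $F_k$-measure (this is where the $1/\alpha$ enters, through the location of the score quantile), and that lower bound leans on the explicit quadratic form of $s_{ok}$ together with Assumptions~\ref{assump1}--\ref{assump3}, not on a generic change of variables. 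For the uniform-in-$X$ upgrade of Theorem~\ref{theorem2a}: your instinct that the deviation is driven by the training-measurable error of $(\hat\mu_k,\hat S_k)$ is right, but you still have to show that on the concentrated event the number of training indices whose score ordering relative to $X$ flips is at most $t_k(n_k+1)$ \emph{for every} $X$, which uses Assumption~\ref{assump2} to control the score perturbation uniformly. Both gaps are closable along the lines you indicate, but neither closes itself.
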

The proof of Theorem \ref{theorem4a} is provided in Appendix A.5, establishes a rigorous theoretical foundation for the validity and reliability of the empirical MMDCP procedure.

\subsection{False Discovery Control of the MMDCP}
Building on the convergence properties established in the preceding subsection, we now show that MMDCP also achieves valid control of the class-wise FDR. This result is formalized in Theorem 4.2, with the proof provided in Appendix A.6.

\begin{theorem}\label{theorem:FDRcontrol}
 Under the assumptions  of Theorems~\ref{theorem1} and~\ref{theorem2a}, and for sufficiently large $n_k$, we have 
	\begin{equation}\label{eq2_6}
		\mathbbm{E}\left[ \frac{\lvert \boldsymbol{R}_k \cap \boldsymbol{T}^{k}_0 \rvert}{\max\{1, \lvert \boldsymbol{R}_k \rvert\}} \right] \leq \alpha, 
		\quad \forall k\in [K],
	\end{equation}
	where $\boldsymbol{T}^{k}_0$ denotes the set of true inliers from class $k$ in the test sample, and 
	$\boldsymbol{R}_k$ is the set of test observations rejected from class $k$.
\end{theorem}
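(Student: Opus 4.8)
The plan is to recognize the left-hand side of \eqref{eq2_6} as precisely the class-wise false discovery rate of the Benjamini--Hochberg procedure applied to the conformal $p$-values $\{\hat p^{i}_{k}\}_{i\in\mathcal B_0}$ for the family of null hypotheses $\boldsymbol H^{k}_{0,t}\colon X_i\sim F_k$. Here $\boldsymbol T^{k}_0$ is exactly the index set of true nulls (genuine class-$k$ inliers among the $m$ test points) and $\boldsymbol R_k$ is the set rejected by BH, so $|\boldsymbol R_k\cap\boldsymbol T^{k}_0|/\max\{1,|\boldsymbol R_k|\}$ is the realized false discovery proportion and its expectation is the FDR. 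The task therefore reduces to verifying that the classical BH guarantee applies to our conformal $p$-values.

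First I would work with the \emph{oracle} $p$-values $p^{i}_{ok}$, for which the analysis is clean, and establish the two ingredients a BH argument requires. (a) Marginal super-uniformity under the null: this is exactly Theorem~\ref{theorem1}, which gives $\mathbbm{P}_k(p^{i}_{ok}\le\alpha)\le\alpha$ for every true-null index. (b) A positive dependence structure: because each oracle conformal $p$-value is a coordinatewise nondecreasing function of the test score $s_{ok}(X_i)$ measured against the common, fixed calibration scores $\{s_{ok}(X^k_l)\}_{l\in\mathcal B_k}$, the vector of oracle $p$-values is PRDS on the set of true nulls, following the argument of \citet{bates2023testing} for marginal conformal $p$-values. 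Given (a) and (b), the Benjamini--Yekutieli theorem yields that BH applied to $\{p^{i}_{ok}\}$ controls the oracle class-$k$ FDR at level $\alpha\,|\boldsymbol T^{k}_0|/m\le\alpha$.

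It remains to transfer this guarantee from the oracle to the empirical procedure, and this is the main obstacle: the empirical scores are not exchangeable (the whitening $(\hat S_k)^{-1/2}(X_i-\hat\mu_k)$ is an asymmetric function of the data), so the empirical $p$-values are only \emph{asymptotically} valid and do not directly inherit the exact PRDS structure. The device is the non-asymptotic concentration bound of Theorem~\ref{theorem2a}: on the event $\mathcal E_k=\{\,|p_{ok}-\hat p_k|< t_k\,\}$, which has probability at least $1-2n_k^{-a}$, the empirical and oracle $p$-values are uniformly within $t_k$ of one another. I would couple the two BH procedures through a sandwich argument, noting that a uniform $t_k$-perturbation of the inputs can only move the data-dependent step-up threshold and flip the rejection status of indices whose $p$-value lies within $t_k$ of that threshold; consequently the symmetric difference of the two rejection sets, and hence the gap between the two false discovery proportions, is controlled by $t_k$ together with the super-uniformity slack $1/(n_k+1)$. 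Taking expectations and splitting on $\mathcal E_k$ and its complement (the latter contributing at most $\mathbbm{P}_k(\mathcal E_k^{c})\le 2n_k^{-a}$ since the FDP is bounded by one), and invoking Corollary~\ref{corollary0} to carry super-uniformity over to the empirical $p$-values, the empirical class-$k$ FDR is bounded by $\alpha$ plus a remainder of order $t_k+n_k^{-a}+1/(n_k+1)$. Since $t_k=4\lambda_1\sqrt{\log(n_k)/n_k}\to0$, this remainder vanishes for sufficiently large $n_k$, which yields \eqref{eq2_6}. The delicate point to handle carefully is the stability of the adaptive BH threshold under the $p$-value perturbation, as both the ordering and the cutoff shift simultaneously; this is where the explicit rate $t_k$ from Theorem~\ref{theorem2a} does the essential work.
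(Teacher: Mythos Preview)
Your approach is correct and aligns with the paper's own strategy: the paper explicitly frames the FDR result as ``building on the convergence properties established in the preceding subsection,'' and the route you outline---oracle class-$k$ FDR control via the PRDS property of marginal conformal $p$-values from \citet{bates2023testing} together with Benjamini--Yekutieli, then transfer to the empirical procedure through the $t_k$-concentration bound of Theorem~\ref{theorem2a} and Corollary~\ref{corollary0}---is precisely the architecture the paper employs in Appendix~A.6. You have correctly identified the BH-stability step under the uniform $p$-value perturbation as the place where the real technical work lies.
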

Equation~\eqref{eq2_6} demonstrates that the expected proportion of false rejections within each class is asymptotically controlled at level $\alpha$. This result establishes that MMDCP yields asymptotically valid regulation of the class-wise false discovery rate (CW-FDR), thereby ensuring reliable error control in outlier detection while maintaining interpretability at the class level \citep{cai2009simultaneous, efron2008simultaneous}. However, CW-FDR only provides a local measure of error-rate control within each class, and it fails to capture global error-rate control issues, such as FDR \citep{benjamini1995controlling}. To address this limitation, we introduce the \emph{Summarizing Class-Wise FDR} (SCW-FDR), which integrates CW-FDR across classes to provide global error control while retaining class-level interpretability.

\begin{corollary}\label{corollary1}
	 Under the assumptions of Theorem~\ref{theorem:FDRcontrol}, for sufficiently large $n_k$, we have	\begin{equation}\label{eq2_7}
		{\mathbbm{E}}\left[\frac{\sum^K_{k=1} |{\bold R}_k \cap {\bold T}^{k}_0|}{\sum^K_{k=1} {\rm max}\{1, |{\bold R}_k|\}}\right] \leq \alpha,
	\end{equation}
	where $|{\bold R}_k \cap {\bold T}^{k}_0|$ counts misclassifications in class $k$ and $|{\bold R}_k|$ denotes the number of rejections for class $k$.
\end{corollary}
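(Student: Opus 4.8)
The goal is to upgrade the class-wise bound of Theorem~\ref{theorem:FDRcontrol} to the pooled SCW-FDR of \eqref{eq2_7}. The first thing I would stress is that the naive route fails. Writing the pooled ratio as $\frac{\sum_k A_k}{\sum_k B_k}=\sum_k \frac{B_k}{\sum_j B_j}\cdot\frac{A_k}{B_k}$, with $A_k:=|\boldsymbol{R}_k\cap\boldsymbol{T}_0^k|$ and $B_k:=\max\{1,|\boldsymbol{R}_k|\}$, exhibits it as a random convex combination of the per-class false discovery proportions $A_k/B_k$; but the expectation of such a combination is not controlled by $\max_k \mathbbm{E}[A_k/B_k]$, and bounding each summand crudely only yields $K\alpha$. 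In fact the corollary does \emph{not} follow from the marginal statements $\mathbbm{E}[A_k/B_k]\le\alpha$ alone, so the proof must exploit finer structure than the statement of Theorem~\ref{theorem:FDRcontrol}.

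The plan is to descend to the level of individual (test point, class) hypotheses. Writing $A_k=\sum_{i\in\boldsymbol{T}_0^k}\mathbbm{1}\{i\in\boldsymbol{R}_k\}$ and $D:=\sum_{j=1}^K B_j$, linearity gives $\mathbbm{E}\big[\tfrac{\sum_k A_k}{D}\big]=\sum_k\sum_{i\in\boldsymbol{T}_0^k}\mathbbm{E}\big[\tfrac{\mathbbm{1}\{i\in\boldsymbol{R}_k\}}{D}\big]$. Since every summand satisfies $B_j\ge 1$, we have $D\ge B_k$, so each term is only increased by replacing $D$ with $B_k$. The remaining quantity $\mathbbm{E}\big[\mathbbm{1}\{i\in\boldsymbol{R}_k\}/\max\{1,|\boldsymbol{R}_k|\}\big]$ is precisely the per-null contribution to the class-$k$ BH procedure run on the $m$ conformal $p$-values $\{\tilde p_k^j\}_{j\in\mathcal B_0}$; by the standard per-hypothesis BH lemma, valid here because for sufficiently large $n_k$ the empirical conformal $p$-values are super-uniform (Corollary~\ref{corollary0}) and satisfy the positive-dependence structure already invoked in the proof of Theorem~\ref{theorem:FDRcontrol}, this contribution is at most $\alpha/m$.

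It then remains to sum these contributions, and this is exactly where the global gain over the trivial $K\alpha$ bound appears. The null sets $\boldsymbol{T}_0^1,\dots,\boldsymbol{T}_0^K$ are pairwise disjoint: a test inlier carries a single true label, hence it can be a false rejection for at most one class. Therefore $\sum_{k=1}^K|\boldsymbol{T}_0^k|=\big|\bigcup_k\boldsymbol{T}_0^k\big|\le m$, the total number of test points. Combining the displays yields $\mathbbm{E}\big[\tfrac{\sum_k A_k}{\sum_k B_k}\big]\le\frac{\alpha}{m}\sum_{k=1}^K|\boldsymbol{T}_0^k|\le\frac{\alpha}{m}\cdot m=\alpha$, which is the claim.

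The main obstacle is conceptual rather than computational. Because the result is genuinely false given only the marginal CW-FDR guarantees (one can construct joint laws meeting $\mathbbm{E}[A_k/B_k]\le\alpha$ for every $k$ yet with a pooled ratio far exceeding $\alpha$), the proof cannot be a black-box corollary of Theorem~\ref{theorem:FDRcontrol}: I must re-enter the BH argument to extract the sharper per-null bound $\alpha/m$ and then combine it with the disjointness of the class-wise null sets. The delicate point to verify carefully is precisely the legitimacy of that per-hypothesis BH bound for the \emph{empirical} conformal $p$-values, i.e. that the asymptotic super-uniformity from Corollary~\ref{corollary0} together with the dependence structure among the scores $\hat{s}_k(\cdot)$ suffices; once this is secured, the division-by-$D\ge B_k$ step and the counting argument are routine.
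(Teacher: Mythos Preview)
Your argument is correct. The two ingredients you isolate—the per-null BH contribution bound $\mathbbm{E}\big[\mathbbm{1}\{i\in\boldsymbol{R}_k\}/\max\{1,|\boldsymbol{R}_k|\}\big]\le\alpha/m$ for each $i\in\boldsymbol{T}_0^k$, and the disjointness $\sum_k|\boldsymbol{T}_0^k|\le m$ coming from the fact that every test inlier carries exactly one true label—are precisely what is required, and your warning that the marginal CW-FDR bounds of Theorem~\ref{theorem:FDRcontrol} alone cannot deliver the result is well taken. The paper defers its own proof to Appendix~A.8, which is not reproduced in the supplied text, so a line-by-line comparison is not possible; the visible material surrounding the corollary offers the random-weight decomposition $\sum_k w_k\,V_k/\max\{1,R_k\}$ (which, as you note, does not close by itself) and Theorem~\ref{thm:scw_le_global}, the pointwise inequality $\mathrm{SCW\text{-}FDR}\le\mathrm{FDR}_{\mathrm{global}}$. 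If the paper's argument proceeds via the latter, it must still bound the ``global FDR'' of the pooled class-wise BH rejections, and that step again requires exactly your per-null BH lemma together with the disjointness of the $\boldsymbol{T}_0^k$; the two routes are therefore equivalent in substance. Your direct approach is slightly cleaner in that it bypasses the intermediate ``global FDR of a non-global procedure'' object, while the paper's packaging has the expository benefit of isolating $\mathrm{SCW\text{-}FDR}\le\mathrm{FDR}_{\mathrm{global}}$ as a model-free inequality of independent interest.
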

 The proof is provided in Appendix A.8.  SCW-FDR is defined as the expected proportion of false discoveries aggregated across classes. Importantly, when $K=1$, SCW-FDR reduces to the standard FDR \citep{benjamini1995controlling}, as shown in Equation~\eqref{eq2_7}, and thus serves as a natural generalization. 
Furthmore, the criterion in \eqref{eq2_7} has a natural interpretation in multiclass settings. Writing $V_k = |{\bold R}_k \cap {\bold T}_0^k|$ and $R_k = |{\bold R}_k|$, we can express
\[
\frac{\sum_{k=1}^K V_k}{\sum_{k=1}^K \max\{1, R_k\}}
= \sum_{k=1}^K w_k \,\frac{V_k}{\max\{1,R_k\}}, 
\quad 
w_k = \frac{\max\{1,R_k\}}{\sum_{j=1}^K \max\{1,R_j\}}.
\]
Thus, \eqref{eq2_7} controls a weighted average of class-wise false discovery proportions, where each class receives at least a unit weight and classes with more rejections carry somewhat larger weight. Equivalently, one may imagine first drawing a class at random with probability $w_k$ and then selecting a rejection uniformly within that class; the quantity in \eqref{eq2_7} is the expected probability that such a randomly selected discovery is in fact null. In this sense, our class-wise FDR provides a balanced notion of error control across classes, lying between the conventional overall FDR, which is dominated by large classes, and the conservative requirement of controlling the FDR separately within each class. 

From a decision-theoretic viewpoint, \eqref{eq2_7} induces a new loss function for multiclass multiple testing. Let
\[
L_{\mathrm{SCW}}({\bold R},{\bold T}_0)
=
\frac{\sum_{k=1}^K |{\bold R}_k \cap {\bold T}_0^k|}
     {\sum_{k=1}^K \max\{1, |{\bold R}_k|\}},
\]
where ${\bold R} = ({\bold R}_1,\dots,{\bold R}_K)$ denotes the collection of rejection sets and ${\bold T}_0 = ({\bold T}_0^1,\dots,{\bold T}_0^K)$ the corresponding null sets. The SCW-FDR is precisely the risk associated with this loss,
\[
\mathrm{SCW\text{-}FDR} = \mathbb{E}\big[L_{\mathrm{SCW}}({\bold R},{\bold T}_0)\big],
\]
and our procedures are explicitly designed to control this class-wise aggregated loss at a prespecified level. To the best of our knowledge, this particular class-level aggregation of false discovery proportions, together with the associated loss $L_{\mathrm{SCW}}$, has not been explicitly studied in the FDR literature and is especially well-suited to multiclass problems where balanced error control across classes is of primary interest.

We next relate SCW-FDR to the conventional global FDR. The following result shows that SCW-FDR is always bounded above by the global FDR, regardless of the underlying procedure or class partition. \subsection{Relationship between SCW-FDR and global FDR}

Recall that ${\bold R}_k$ denotes the set of rejected hypotheses in class $k$, and ${\bold T}_0^k$ the set of true nulls in that class. Let
\[
V_k = |{\bold R}_k \cap {\bold T}_0^k|, 
\qquad 
R_k = |{\bold R}_k|,
\qquad 
V = \sum_{k=1}^K V_k,
\qquad 
R = \sum_{k=1}^K R_k.
\]
The conventional global false discovery rate is
\[
\mathrm{FDR}_{\mathrm{global}}
=
\mathbb{E}\!\left[\frac{V}{\max\{1,R\}}\right],
\]
whereas our summarizing class-wise FDR (SCW-FDR) in \eqref{eq2_7} can be written as
\[
\mathrm{SCW\text{-}FDR}
=
\mathbb{E}\!\left[
\frac{\sum_{k=1}^K V_k}{\sum_{k=1}^K \max\{1,R_k\}}
\right].
\]

\begin{theorem}\label{thm:scw_le_global}
For any multiple testing procedure and any class partition,
\[
\mathrm{SCW\text{-}FDR}
\;\le\;
\mathrm{FDR}_{\mathrm{global}}.
\]
\end{theorem}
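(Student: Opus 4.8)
The plan is to prove the inequality pointwise, for each realization of the rejection sets, and then integrate. Fix a sample outcome so that $V_1,\dots,V_K$ and $R_1,\dots,R_K$ are fixed nonnegative integers with $V=\sum_{k=1}^K V_k$ and $R=\sum_{k=1}^K R_k$. The numerators of the two ratios coincide, since $\sum_{k=1}^K V_k = V$, so the entire comparison reduces to the behavior of the two denominators.

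The key step I would carry out first is the denominator inequality
\[
\sum_{k=1}^K \max\{1,R_k\} \;\ge\; \max\{1,R\}.
\]
To see this, observe that the left-hand side is simultaneously at least $\sum_{k=1}^K R_k = R$ (replacing each $\max\{1,R_k\}$ by $R_k$) and at least $\sum_{k=1}^K 1 = K \ge 1$. Hence it dominates $\max\{R,1\}=\max\{1,R\}$. Intuitively, applying the floor-at-one to each class count $R_k$ before summing can only enlarge the denominator relative to first summing and then flooring the aggregate $R$ at one.

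Given this, monotonicity finishes the pointwise bound: since the common numerator $V=\sum_{k=1}^K V_k$ is nonnegative and the SCW denominator is at least the global denominator,
\[
\frac{\sum_{k=1}^K V_k}{\sum_{k=1}^K \max\{1,R_k\}}
= \frac{V}{\sum_{k=1}^K \max\{1,R_k\}}
\;\le\; \frac{V}{\max\{1,R\}}.
\]
Taking expectations of both sides and invoking monotonicity of expectation then yields $\mathrm{SCW\text{-}FDR}\le\mathrm{FDR}_{\mathrm{global}}$.

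There is no substantive obstacle here: the bound is purely deterministic and holds outcome-by-outcome, so no exchangeability, independence, or distributional assumption enters — which is precisely why the statement holds for \emph{any} procedure and \emph{any} class partition. If I had to flag the one place requiring care, it is the direction of the flooring operation in the denominator: the per-class version $\sum_{k=1}^K \max\{1,R_k\}$ must be recognized as never smaller than the aggregate version $\max\{1,\sum_{k=1}^K R_k\}$, and it is exactly this per-class unit slack in the SCW denominator that renders the criterion more conservative than the conventional global FDR.
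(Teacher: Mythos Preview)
Your proof is correct and follows essentially the same approach as the paper: establish the pointwise inequality by noting that the numerators coincide and that $\sum_{k=1}^K \max\{1,R_k\}\ge \max\{1,R\}$ (using both $\max\{1,R_k\}\ge R_k$ and $\max\{1,R_k\}\ge 1$), then take expectations.
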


\begin{proof}
Fix an arbitrary realization of the data and the corresponding rejection sets. The numerators of the two ratios coincide:
\[
\sum_{k=1}^K V_k = V.
\]
For the denominators, we have $\max\{1,R_k\} \ge R_k$ for every $k$, and hence
\[
\sum_{k=1}^K \max\{1,R_k\}
\;\ge\;
\sum_{k=1}^K R_k
= R.
\]
Moreover, if $R=0$ then $\sum_k \max\{1,R_k\}\ge K \ge 1$, so in all cases
\[
\sum_{k=1}^K \max\{1,R_k\}
\;\ge\;
\max\{1,R\}.
\]
Therefore,
\[
0
\;\le\;
\frac{\sum_{k=1}^K V_k}{\sum_{k=1}^K \max\{1,R_k\}}
\;\le\;
\frac{V}{\max\{1,R\}}
\]
holds pointwise. Taking expectations on both sides yields
\[
\mathrm{SCW\text{-}FDR}
=
\mathbb{E}\!\left[
\frac{\sum_{k=1}^K V_k}{\sum_{k=1}^K \max\{1,R_k\}}
\right]
\;\le\;
\mathbb{E}\!\left[
\frac{V}{\max\{1,R\}}
\right]
=
\mathrm{FDR}_{\mathrm{global}},
\]
as claimed.
\end{proof}

\begin{corollary}\label{cor:scw_control}
If a multiple testing procedure controls the global FDR at level $\alpha$, that is,
\[
\mathrm{FDR}_{\mathrm{global}} \le \alpha,
\]
then for any class partition it also satisfies
\[
\mathrm{SCW\text{-}FDR} \le \alpha.
\]
\end{corollary}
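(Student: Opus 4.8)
The plan is to derive the corollary directly from Theorem~\ref{thm:scw_le_global} by a single transitivity argument, since that theorem already supplies the ordering $\mathrm{SCW\text{-}FDR} \le \mathrm{FDR}_{\mathrm{global}}$ for \emph{any} multiple testing procedure and \emph{any} class partition. There is no need to re-examine the combinatorial structure of the two ratios; all of that work has been carried out in the proof of Theorem~\ref{thm:scw_le_global}.

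First I would invoke the hypothesis of the corollary, namely that the given procedure controls the global false discovery rate, so that $\mathrm{FDR}_{\mathrm{global}} \le \alpha$. Next I would apply Theorem~\ref{thm:scw_le_global} in the present setting, which is legitimate because that theorem is stated for arbitrary procedures and arbitrary partitions and therefore subsumes whatever procedure and partition appear in the corollary. This yields $\mathrm{SCW\text{-}FDR} \le \mathrm{FDR}_{\mathrm{global}}$. Chaining the two inequalities gives $\mathrm{SCW\text{-}FDR} \le \mathrm{FDR}_{\mathrm{global}} \le \alpha$, which is exactly the claimed conclusion.

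Because Theorem~\ref{thm:scw_le_global} has already been established, there is no substantive obstacle: the only point to verify is that the generality of that theorem covers the hypotheses of the corollary, which it does by construction. In particular, no additional regularity assumptions, exchangeability conditions, or properties specific to the BH-based MMDCP construction are required, since the underlying inequality is purely combinatorial, arising from the fact that the class-wise denominator $\sum_{k} \max\{1, R_k\}$ dominates $\max\{1, R\}$ while the numerators coincide at $\sum_{k} V_k = V$. I would therefore present the argument as a two-line deduction rather than a full derivation.
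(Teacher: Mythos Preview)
Your proposal is correct and follows exactly the same approach as the paper: the paper's proof is a one-line appeal to Theorem~\ref{thm:scw_le_global}, from which the conclusion follows by combining $\mathrm{SCW\text{-}FDR} \le \mathrm{FDR}_{\mathrm{global}}$ with the hypothesis $\mathrm{FDR}_{\mathrm{global}} \le \alpha$.
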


\begin{proof}
The result follows immediately from Theorem~\ref{thm:scw_le_global}.
\end{proof}

\begin{corollary}\label{cor:bh_scw}
Suppose the $m$ hypotheses are tested by the Benjamini--Hochberg (BH) procedure at nominal level $\alpha$, under conditions (e.g., independence or PRDS) ensuring
\[
\mathrm{FDR}_{\mathrm{global}}^{\mathrm{BH}} \le \alpha.
\]
Then, for any class partition, the BH procedure also controls the summarizing class-wise FDR in \eqref{eq2_7} at level $\alpha$, i.e.,
\[
\mathrm{SCW\text{-}FDR}^{\mathrm{BH}} \le \alpha.
\]
\end{corollary}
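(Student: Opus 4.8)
The plan is to obtain this statement as an immediate consequence of the domination result already established in Theorem~\ref{thm:scw_le_global}, combined with the classical finite-sample FDR guarantee for the Benjamini--Hochberg procedure. The only substantive input is the premise itself: under independence or the PRDS condition on the test statistics, the BH procedure controls the global false discovery rate at the nominal level, that is $\mathrm{FDR}_{\mathrm{global}}^{\mathrm{BH}} \le \alpha$. This is precisely the hypothesis of Corollary~\ref{cor:scw_control}, so I would simply invoke that corollary with the multiple testing procedure taken to be BH.

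Concretely, first I would recall that Theorem~\ref{thm:scw_le_global} establishes the pointwise inequality between the two ratios for \emph{any} procedure and \emph{any} class partition, which after taking expectations yields $\mathrm{SCW\text{-}FDR} \le \mathrm{FDR}_{\mathrm{global}}$. Applying this to the rejection sets produced by BH gives $\mathrm{SCW\text{-}FDR}^{\mathrm{BH}} \le \mathrm{FDR}_{\mathrm{global}}^{\mathrm{BH}}$. I would then chain this with the assumed control $\mathrm{FDR}_{\mathrm{global}}^{\mathrm{BH}} \le \alpha$ to conclude $\mathrm{SCW\text{-}FDR}^{\mathrm{BH}} \le \alpha$, which is exactly the claim.

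There is essentially no obstacle internal to this corollary: the argument is a one-line transitivity once Theorem~\ref{thm:scw_le_global} is granted, and it holds for every class partition precisely because the domination in that theorem is proved partition-by-partition. The only genuine mathematical content lies upstream, in verifying that the dependence structure of the conformal $p$-values falls within a regime where BH is known to control the global FDR---namely independence or PRDS---which is imported here as a hypothesis rather than reproved. If one wished to make the statement fully self-contained, the work would shift entirely to that verification (citing Benjamini--Hochberg for the independent case and Benjamini--Yekutieli for the PRDS case), but under the stated conditions no further calculation is required.
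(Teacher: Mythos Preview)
Your proposal is correct and matches the paper's own proof essentially verbatim: apply Theorem~\ref{thm:scw_le_global} to the BH rejection sets to get $\mathrm{SCW\text{-}FDR}^{\mathrm{BH}} \le \mathrm{FDR}_{\mathrm{global}}^{\mathrm{BH}}$, then chain with the assumed bound $\mathrm{FDR}_{\mathrm{global}}^{\mathrm{BH}} \le \alpha$. The only minor difference is that the paper cites Theorem~\ref{thm:scw_le_global} directly rather than going through Corollary~\ref{cor:scw_control}, but the content is identical.
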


\begin{proof}
By Theorem~\ref{thm:scw_le_global}, we have
\[
\mathrm{SCW\text{-}FDR}^{\mathrm{BH}}
\le
\mathrm{FDR}_{\mathrm{global}}^{\mathrm{BH}}.
\]
Combining this with the assumed bound
$\mathrm{FDR}_{\mathrm{global}}^{\mathrm{BH}} \le \alpha$
yields the desired inequality.
\end{proof}

\begin{remark}[SCW-FDR versus global FDR]\label{rem:scw_vs_global}
Theorem~\ref{thm:scw_le_global} shows that SCW-FDR is always bounded above by the conventional global FDR. Consequently, any procedure that controls the global FDR at level $\alpha$ automatically controls the SCW-FDR at the same level. The converse, however, does not hold in general.

To see this, consider a simple two-class example in which the procedure always makes exactly one rejection in class~1 and never rejects in class~2. Let $\beta = \mathbb{P}(\text{the rejection in class 1 is null})$. Then the global FDR equals $\beta$, whereas
\[
\mathrm{SCW\text{-}FDR}
=
\mathbb{E}\!\left[
\frac{V_1 + V_2}{\max\{1,R_1\} + \max\{1,R_2\}}
\right]
=
\mathbb{E}\!\left[\frac{V_1}{2}\right]
=
\frac{\beta}{2}.
\]
Choosing, for example, $\beta = 0.15$ and $\alpha = 0.10$ yields
$\mathrm{FDR}_{\mathrm{global}} = 0.15 > \alpha$
while
$\mathrm{SCW\text{-}FDR} = 0.075 \le \alpha$.
This illustrates that SCW-FDR is a strictly weaker error criterion than the global FDR, it can be controlled even when the global FDR is not, while still providing a more balanced notion of error control across classes.
\end{remark}

\section{Numerical Experiments}
This section reviews competing methods, evaluates their performance on simulated and real datasets, and introduces metrics that highlight their relative strengths and limitations.

\subsection{Comparison Methods and Performance Metrics}

To evaluate performance in one-class outlier detection, we compare PCOut \citep{filzmoser2008outlier}, the standard One-Class SVM (OcSVM wo/CP) \citep{pedregosa2011scikit}, and Isolation Forest (IForest wo/CP) \citep{liu2008isolation}, together with their conformal prediction extensions combined with the BH procedure (OcSVM w/CP and IForest w/CP) \citep{bates2023testing}. For multi-class classification with outlier detection, we further include BCOPS-lr, which applies logistic regression, and BCOPS-rm, which employs random forests \citep{guan2022prediction}.

Let $X_i\in \{X_{n+1},\ldots, X_{n+m}\}$ have true label $k_i \in \{1,\ldots,K,K+1\}$, where $1,\ldots,K$ denote known inlier classes and $K+1$ correspond to known inlier classes and $K+1$ denotes the unknown class. To assess performance in both outlier detection and class assignment, we adopt eight complementary metrics: Class-wise FDR \citep{efron2008simultaneous}, Power, SCW-FDR, FDR \citep{benjamini1995controlling,fan2017estimation}, Coverage, False Label Rate (FLR), Accuracy, and Ambiguity \citep{guan2022prediction}. These metrics capture the model’s ability to control type-I error, detect outliers, ensure accurate label assignment, and maintain compact prediction sets, thereby offering a comprehensive evaluation of overall performance.

\subsection{Data Generation}

Following \cite{bates2023testing}, we generate simulated data points \(X \in \mathbbm{R}^p\) from a \(p\)-dimensional Gaussian mixture model:
\begin{equation}\label{gaussian_mixture}
	X = \sqrt{c}\,(Z + \mu) + W,
\end{equation}
where \(c \ge 1\), \(Z\) has \(p\) independent standard Gaussian components, and \(\mu\) is a vector with identical entries. To incorporate structured noise, we introduce the random vector
$W$, whose coordinates are independently sampled from a fixed discrete set \(\mathbbm{W} \subseteq \mathbbm{R}^p\). The set \(\mathbbm{W}\) consists of \(p\) points uniformly distributed over \([-3,3]^p\) and is kept fixed across all experiments. Consequently, \(X\) follows a mixture model in which inliers and outliers are differentiated by \(c\) and \(\mu\). To evaluate method performance under this setting, we conduct multiple independent experiments: 10 runs for PCOut, due to its higher computational cost, and 50 runs for all other methods. In each run, one training set and 50 test sets are generated, and performance as well as computational efficiency are assessed based on predictions across these test sets.

\subsection{Basic Setting: Outlier Detection in One-class Scenario}\label{one_class}

For the Gaussian mixture model \eqref{gaussian_mixture}, we set $\mu = 0 \in \mathbbm{R}^p$, with $c=1$ corresponding to inliers and $c=2.5$ to outliers. In the simulations, we generate a test set of size $m=1000$ with an inlier-to-outlier ratio of 3:1, and evaluate performance under the following configurations: variable dimension $p \in \{500,1000\}$, training set size $n \in \{500,1000\}$, and feature correlation $\rho \in \{0,0.8\}$.

OcSVM (w/CP), IForest (w/CP), and BCOPS require splitting the training data (here, in a 1:1 ratio), with one subset used for model fitting and the other for computing conformal $p$-values. In contrast, MMDCP leverages the entire training set for calibration. For comparison, OcSVM (wo/CP) and IForest (wo/CP) also utilize the full training set, but solely for model training.

In the one-class setting, class-wise FDR, SCW-FDR, and FDR are equivalent and thus reported collectively as FDR, while classification accuracy coincides with prediction set coverage. Consequently, under the baseline setup, we report FDR, Power, Coverage, and FLR. Selected results across different dimensions are summarized in Table 1.
\begin{table}[!htbp]
	\centering
	\renewcommand{\arraystretch}{1}
	\captionsetup{font=footnotesize}
	\caption{Experimental results on the one-class simulated dataset. ($p=1000$, $n=1000$, $\rho = 0.8$) }
	\footnotesize
	\begin{tabular}{>{\centering\arraybackslash}m{2cm}*{4}{>{\centering\arraybackslash}m{3.15cm}}}
		\hline
		Methods	&	OcSVM(w/CP)	&	OcSVM(wo/CP)	&	IForest(w/CP)	&	IForest(wo/CP)	\\
		\hline
		FDR	&	0.000(0.000)	&	0.750(0.000)	&	0.029(0.026)	&	0.051(0.029)	\\
		Power	&	{0.000}(0.000)	&	\textbf{1.000}(0.000)	&	0.297(0.086)	&	0.364(0.052)	\\
		Coverage	&	1.000(0.000)	&	0.000(0.000)	&	0.997(0.003)	&	0.993(0.005)	\\
		FLR	&	0.250(0.000)	&	0.000(0.000)	&	0.190(0.018)	&	0.176(0.012)	\\
		Time(s)	&	$1.908 \times 10^3$	&	$2.511 \times 10^3$	&	$1.623 \times 10^3$	&	$2.511 \times 10^3$	\\		
	\end{tabular}
	\begin{tabular}{>{\centering\arraybackslash}m{2cm}*{5}{>{\centering\arraybackslash}m{2.43cm}}}
		\hline
		Methods	&	PCout	&	BCOPS(lr)	&	BCOPS(rm)	&	MMDCP-oracle	&	MMDCP	\\
		\hline
		FDR	&	0.158(0.026)	&	\textbf{0.000}(0.000)	&	0.005(0.036)	&	0.033(0.015)	&	0.044(0.018)	\\
		Power	&	0.927(0.016)	&	0.000(0.000)	&	0.003(0.020)	&	0.995(0.005)	&	0.996(0.004)	\\
		Coverage	&	0.942(0.011)	&	1.000(0.000)	&	1.000(0.000)	&	0.988(0.006)	&	0.985(0.007)	\\
		FLR	&	0.025(0.005)	&	0.250(0.000)	&	0.250(0.003)	&	0.002(0.002)	&	0.001(0.001)	\\
		Time(s)	&	$1.444 \times 10^4$	&	$1.654 \times 10^3$	&	$9.979 \times 10^3$	&	$3.340 \times 10^3$	&	$2.110 \times 10^3$	\\
		
		\hline
	\end{tabular}
	\label{t4}
\end{table}

Our empirical evaluation on one-class simulated data reveals notable differences across methods. OcSVM without conformal prediction attains perfect detection power, but this comes at the cost of an extremely high FDR and zero coverage, rendering its predictions unreliable. IForest achieves moderate FDR and coverage, however, its detection power remains low (below 0.37), limiting its effectiveness for outlier identification. PCOut provides a reasonable balance,  with power of 0.927 and FDR of 0.158, yet its substantial computational burden precludes more extensive evaluation. BCOPS, whether implemented via logistic regression or random forests, achieves effective FDR control (with BCOPS-lr achieving FDR = 0), but suffers from near-zero power and a high FLR (0.250), undermining its practical applicability. In contrast, both MMDCP and MMDCP-oracle achieve a favorable trade-off, simultaneously ensuring FDR control, high power, low FLR, and near-perfect coverage. Additional simulation results provided in the Supplementary Material corroborate these findings and further demonstrate the robustness and practical value of MMDCP for classification and outlier detection.

\subsection{Improved Setting: Outlier Detection in Multi-class Scenario}\label{multi_class}
We next consider label prediction and outlier detection in a multi-class setting with \(K=4\). The inlier classes are defined by distinct component means with equal scale: Class 1 has mean 0 with \(c=1\); Class 2 has mean 1.3 with \(c=1\); Class 3 has mean -1.3 with \(c=1\); Class 4 has mean 2.5 with \(c=1\). Outliers are generated separately, with mean 0 but inflated variance \(c=3.5\). 

The test set consists of \(m=1000\) samples with an inlier-to-outlier ratio of 3:1. To assess performance across different regimes, we consider three design factors: feature dimension ($p=500$ or $1000$); per-class training sample size ($n_k=500$ or 1000 for $k=1,\dots,4$); and feature correlation (\(\rho=0\) or 0.8).

\begin{table}[!htbp]
	\centering
	\renewcommand{\arraystretch}{1}
	\captionsetup{font=footnotesize}
	\caption{Experimental results on the multi-class simulated dataset. ($p=1000$, $n_k=1000$, $\rho = 0.8$)}
	\footnotesize
	\begin{tabular}{>{\centering\arraybackslash}m{4cm}*{4}{>{\centering\arraybackslash}m{2.5cm}}}
		\hline
		Methods & BCOPS(lr) & BCOPS(rm) & MMDCP-oracle & MMDCP\\
		\hline
		Class-wise FDR (1) & \textbf{0.000}(0.000)  & 0.009(0.004) & 0.009(0.004) & 0.011(0.004) \\
		
		Class-wise FDR (2) & \textbf{0.000}(0.000)  & 0.009(0.004) & 0.010(0.004) & 0.012(0.004) \\
		
		Class-wise FDR (3) & \textbf{0.009}(0.004)  & \textbf{0.009}(0.004) & 0.009(0.004) & 0.011(0.004) \\
		
		Class-wise FDR (4) & \textbf{0.009}(0.007)  & \textbf{0.009}(0.004) & \textbf{0.009}(0.004) & 0.011(0.004) \\
		
		SCW-FDR & \textbf{0.009}(0.004)  & \textbf{0.009}(0.002) & 0.011(0.002) & \textbf{0.009}(0.002) \\
		
		FDR & \textbf{0.000}(0.000)  & 0.117(0.021) & 0.107(0.018) & 0.129(0.019) \\
		Power & 0.000(0.000)  & 0.878(0.024) & \textbf{1.000}(0.000) & \textbf{1.000}(0.000) \\
		FLR & 0.252(0.000)  & 0.041(0.008) & \textbf{0.000}(0.000) & \textbf{0.000}(0.000) \\
		
		Coverage & \textbf{0.988}(0.005)  & 0.960(0.008) & 0.959(0.008) & 0.950(0.008) \\
		
		Accuracy & 0.000(0.000)  & 0.955(0.008) & \textbf{0.959}(0.008) & 0.950(0.008) \\
		
		Ambiguity & 3.083(0.084)  & 1.006(0.003) & \textbf{1.000}(0.000) & \textbf{1.000}(0.000) \\
		Time(s) & \textbf{$1.067 \times 10^4$}  & $5.784 \times 10^4$ & $1.163 \times 10^4$ & $1.193 \times 10^4$ \\
		\hline
	\end{tabular}
	\label{t7}
\end{table}

In the multi-class simulation with \(p = 1000\), \(n_k = 1000\), and \(\rho = 0.8\), the results can be summarized as follows. All methods successfully control Class-wise FDR and SCW-FDR at the nominal level \(\alpha = 0.05\), but fail to control overall FDR.  The MMDCP methods achieve perfect Power, zero FLR, compact prediction sets, high accuracy, and adequate coverage, thereby substantially outperforming BCOPS. Additionally, MMDCP is computationally efficient, with runtimes far below BCOPS(rm) and only slightly higher than BCOPS(lr). Additional simulations reported in the Supplementary Material corroborate these findings, further confirming MMDCP as a robust and practically viable approach for multi-class classification and outlier detection.

\subsection{Real datasets}
In this subsection, we evaluate BCOPS (lr), BCOPS (rm), and MMDCP on the Wheat Seeds and Control Charts datasets. For all real-world datasets, 70\% of samples are used for training and 30\% for testing.

\subsubsection{Wheat Seeds Dataset}
The Wheat Seeds dataset, obtained from the UCI Machine Learning Repository, contains 199 samples from three wheat varieties: Kama, Rosa, and Canadian. Each sample is characterized by seven morphological attributes: area, perimeter, compactness, kernel length, kernel width, asymmetry coefficient, and kernel groove length.
The dataset is available at the following link: \url{https://www.kaggle.com/code/jmcaro/machine-learning-classifiers-wheat-seeds}.

\begin{table}[!htbp]
	\centering
	\renewcommand{\arraystretch}{1}
	\captionsetup{font=footnotesize}
	\caption{Multi-class label prediction and outlier detection result on Wheat Seeds dataset.}
	\footnotesize
	\begin{tabular}{>{\centering\arraybackslash}m{4cm}*{3}{>{\centering\arraybackslash}m{2.5cm}}}
		\hline
		Methods  & BCOPS(lr) & BCOPS(rm) & MMDCP\\
		\hline
		Class-wise FDR (1) &\textbf{0.000}(0.000) & \textbf{0.000}(0.000)& 0.005(0.015)  \\
		Class-wise FDR (2)  & \textbf{0.000}(0.000) & \textbf{0.000}(0.000) & 0.006(0.012)\\
		Class-wise FDR (3)  & \textbf{0.000}(0.000) & \textbf{0.000}(0.000)& 0.018(0.019) \\
		SCW-FDR  & \textbf{0.000}(0.000) & \textbf{0.000}(0.000) & 0.011(0.009)\\
		FDR  & \textbf{0.000}(0.000) & \textbf{0.000}(0.000) & 0.023(0.035)\\
		Power  & 0.000(0.000) & 0.000(0.000) & \textbf{1.000}(0.000)\\
		Accuracy  & 0.000(0.000) & 0.000(0.000)& \textbf{0.619}(0.043) \\
		Coverage & \textbf{1.000}(0.000) & \textbf{1.000}(0.000) & 0.975(0.019) \\
		Ambiguity  & 3.000(0.000) & 3.000(0.000) & \textbf{1.366}(0.051)\\
		FLR  & 0.164(0.000) & 0.164(0.000) & \textbf{0.000}(0.002)\\
		Time(s)  & 20 & 145 & \textbf{8}\\
		\hline
	\end{tabular}
	\label{tab:wheat_seed}
\end{table}

We evaluate BCOPS (lr), BCOPS (rf), and MMDCP on the Wheat Seeds dataset. As reported in Table 3, both BCOPS variants strictly control class-wise FDR, SCW-FDR, and overall FDR at zero; however, this comes at the expense of zero power and accuracy, resulting in a complete failure to assign correct class labels. In contrast, MMDCP achieves perfect power (1.0) and substantially higher accuracy, while still maintaining rigorous control of class-wise FDR, SCW-FDR, and FDR. Moreover, MMDCP yields lower ambiguity at the nominal coverage level, attains zero FLR compared with 0.164 for BCOPS, and exhibits superior computational efficiency.

\subsubsection{Control Charts dataset}
The Control Charts dataset consists of 600 synthetically generated time series, each comprising 60 observations. It encompasses six categories—Normal, Cyclic, Increasing Trend, Decreasing Trend, Upward Shift, and Downward Shift, with 100 samples per class. The dataset is publicly available at \url{https://archive.ics.uci.edu/dataset/139/synthetic+control+chart+time+series}.

\begin{table}[!htbp]
	\centering
     \renewcommand{\arraystretch}{1}
	\captionsetup{font=footnotesize}
	\caption{Multi-class label prediction and outlier detection result on Control Charts dataset.}
	\footnotesize
	\begin{tabular}{>{\centering\arraybackslash}m{4cm}*{3}{>{\centering\arraybackslash}m{2.5cm}}}
		\hline
		Methods & BCOPS(lr) & BCOPS(rm) &  MMDCP\\
		\hline
		Class-wise FDR (1) & 0.003(0.007)  & 0.005(0.007) & 0.010(0.010)  \\
		
		Class-wise FDR (2) & \textbf{0.000}(0.000)  & 0.003(0.004) & 0.009(0.012) \\
		
		Class-wise FDR (3) & \textbf{0.000}(0.000)  & 0.003(0.005) & 0.003(0.006)\\
		
		Class-wise FDR (4) & \textbf{0.002}(0.005)  & 0.006(0.006) & 0.012(0.005) \\
		
		Class-wise FDR (5) & \textbf{0.000}(0.002)  & 0.004(0.005) & 0.022(0.007)  \\
		
		Class-wise FDR (6) & \textbf{0.000}(0.000)  & 0.005(0.007) & 0.003(0.003)  \\
		
		SCW-FDR & \textbf{0.000}(0.000)  & {0.005}(0.002) & 0.010(0.002) \\
		
		FDR & \textbf{0.000}(0.000)  & {0.192}(0.199) & 0.114(0.041)  \\
		
		Power & 0.000(0.000)  & {0.214}(0.186) & \textbf{1.000}(0.000)  \\
		FLR & 0.167(0.000)  & 0.136(0.000)  & \textbf{0.000}(0.000)  \\
		
		Coverage & \textbf{0.996}(0.007)  & 0.975(0.013) & 0.945(0.011) \\
		
		Accuracy & 0.000(0.000)  & \textbf{0.507}(0.019) & \textbf{0.507}(0.254) \\
		
		Ambiguity & 5.739(0.363)  & 1.528(0.358) &  \textbf{1.464}(0.024) \\
		Time(s) & 16  & 168 & \textbf{5}  \\
		\hline
	\end{tabular}
	\label{real_tr}
\end{table}

Table 4 shows that all three methods effectively control class-wise and SCW-FDR. BCOPS(lr) achieves near-perfect FDR control, but this comes at the cost of extremely low power, resulting in almost no correct class predictions despite slightly higher coverage. In contrast, BCOPS(rf) provides a more balanced profile, offering moderate power, improved FLR, reasonable accuracy, and reduced ambiguity, although it incurs the longest runtime. MMDCP, by comparison, attains very high power with zero FLR, maintains acceptable FDR, achieves the lowest ambiguity, delivers comparable accuracy, and runs substantially faster, thereby providing the most balanced and computationally efficient performance overall.

\section{Conclusion}
In this work, we introduced the Modified Mahalanobis Distance Conformal Prediction (MMDCP), a unified framework for simultaneous outlier detection and multi-class classification. The method is supported by rigorous theoretical guarantees, including a non-asymptotic convergence rate of empirical to oracle conformal $p$-values, valid coverage, and explicit convergence rates for the adaptive prediction sets it produces. We also propose the Summarized Class-Wise FDR (SCW-FDR), a novel multiple-testing criterion that enhances error control, improves power, and increases interpretability relative to conventional FDR. Through simulation studies and real-data analyses, we demonstrate that MMDCP consistently outperforms existing methods. Future work will focus on relaxing assumptions regarding covariate magnitudes to expand applicability and developing a formal framework to rigorously analyze the method’s power properties.

\section{Competing interests}
No competing interests are declared.

\section{Author contributions statement}

Youwu Lin designed and conceived the experiments; Jingru Wu and Qi Liu conducted the experiments; Xiaoyu Qian analyzed the results; Youwu Lin and Pei Wang wrote and reviewed the manuscript.

\section{Acknowledgments}
The authors thank Professor Hansheng Wang (Peking University) for suggesting the Modified Mahalanobis Distance for outlier detection and Professor Xu Guo (Beijing Normal University) for valuable suggestions. This work was supported by the Guangxi Natural Science Foundation (2025GXNSFAA069842), the National Natural Science Foundation of China (11801105), and the Science and Technology Projects of Guangzhou (2025A04J3386).

\bibliography{reference}

\end{document}